\newtheorem{theorem}{Theorem}[section]
\newtheorem{definition}[theorem]{Definition}
\newtheorem{corollary}[theorem]{Corollary}
\newtheorem{lemma}[theorem]{Lemma}
\newtheorem{remark}[theorem]{Remark}
\newcommand{\N}{\mathbb{N}}
\newcommand{\rk}{\mathsf{rk}}
\newcommand{\rka}{\mathsf{ark}}
\newcommand{\Ack}{\mathsf{Ack}}
\newcommand{\p}{\mathcal{P}}
\newcommand{\Oh}{\mathcal{O}}
\begin{document}
\date{}

\title{Enumeration of the adjunctive hierarchy of \\hereditarily finite sets}
\author{Giorgio Audrito}
\affil{Department of Mathematics ``Giuseppe Peano'', University of Torino, Italy\\
\texttt{giorgio.audrito@unito.it}}

\author{Alexandru I. Tomescu}
\affil{Helsinki Institute for Information Technology HIIT\\ Department of Computer Science, University of Helsinki, Finland\\
\texttt{tomescu@cs.helsinki.fi}}

\author{Stephan Wagner}
\affil{Department of Mathematical Sciences, Stellenbosch University, South Africa\\
\texttt{swagner@sun.ac.za}}

\maketitle

\begin{abstract}
Hereditarily finite sets (sets which are finite and have only hereditarily finite sets as members) are basic mathematical and computational objects, and also stand at the basis of some programming languages. We solve an open problem proposed by Kirby in 2008 concerning a recurrence relation for the cardinality $a_n$ of the $n$-th level of the adjunctive hierarchy of hereditarily finite sets; in this hierarchy, new sets are formed by the addition of a new single element drawn from the already existing sets to an already existing set. We also show that our results can be generalized to sets with atoms, or can be refined by rank, cardinality, or by the maximum level from where the new adjoined element is drawn. 

We also show that $a_n$ satisfies the asymptotic formula $a_n = C^{2^n} + O(C^{2^{n-1}})$, for a constant $C \approx 1.3399$, which is a too fast asymptotic growth for practical purposes. We thus propose a very natural variant of the adjunctive hierarchy, whose asymptotic behavior we prove to be $\Theta(2^n)$.

\smallskip
\noindent\textsc{Keywords:} Finite set theory, adjunction, hierarchy, counting problem, combinatorial problem, recurrence relation, asymptotic analysis \\
\noindent \textsc{1998 ACM Subject Classification:} F.4.1 Mathematical Logic; G.2.1 Combinatorics

\end{abstract}

\section{Introduction}
\label{intro}

\begin{table*}[t]
\small
\caption{The first 10 values of $a_n$.\label{table-adj}}
\begin{center}
\begin{tabular}{|l|l|}
\hline
$n$ & $a_n$ \\\hline
0 &  1 \\
1 &  2 \\
2 &  4 \\
3 &  12 \\
4 &  112 \\
5 &  11680 \\
6 &  135717904 \\
7 &  18418552718041816 \\
8 &  339243082977367810522963263986432 \\
9 &  115085869347989258409868700405844845152126435897832396556555233936 \\
\hline
\end{tabular}
\end{center}
\label{table-main}
\end{table*}%

Sets are basic mathematical and computational objects; usually one considers sets of two kinds: \emph{pure sets}, that is, sets which, unless empty, contain only other pure sets as elements; or \emph{sets with atoms}, that is, sets which can also have as elements objects from a given collection of \emph{atoms}, or \emph{urelements}. In this paper we focus mainly on sets which are pure and hereditarily finite, in the sense that they are finite, and all of their members are hereditarily finite sets.

This paper is motivated by an effort to cross-fertilize set theory and computer science, started by Jacob T. Schwartz in the 1970s. This field, called \emph{computable set theory} in \cite{CFO89}, has led, on the one hand, to set-based programming languages such as $\mathsf{SETL}$ \cite{SDDS86}, or the more recent \{log\} \cite{DovierOPR96} and CLP($\mathcal{SET}$) \cite{DovierPPR00}. On the other hand, it has uncovered decidable fragments of set theory \cite{STfC,ParlamentoPolicriti1991}. One emblematic example is the {\bf M}ulti-{\bf L}evel-{\bf S}yllogistic with {\bf S}ingleton fragment and its enaction into the automatic proof-checker {\sf Referee/\AE tnaNova} \cite{REF2,Computational-Logic-and-Set-Theory}. 

These combined efforts have raised the need for \emph{efficient} computer representations of hereditarily finite sets. One such representation can simply be a bijection between hereditarily finite sets and natural numbers; in this case one is also interested in so-called \emph{ranking/unranking} algorithms. By an efficient representation we mean that less `complex' sets have smaller encodings; or, otherwise stated, that sets having `complexity' $n$ be encoded with a number of bits polynomial in $n$. 

Any such representation depends on the measure of `complexity' of the sets. Since the family of hereditarily finite sets, denoted as $V_\omega$, is usually defined by an iterative bottom-up construction starting from the empty set, the `complexity' of a set is the stage of this iterative process at which it is constructed. Consider the usual von Neumann's \emph{cumulative hierarchy} \cite{Jech:2003ly} obtained by repeatedly taking the family of subsets of the sets constructed so far. Formally,
\[V_0 := \emptyset, \ \ V_{n+1} := \mathcal{P}(V_n), \ \ V_\omega := \bigcup_{n \in \omega} V_n,\] 
\noindent where $\mathcal{P}$ denotes the power-set operator. The first levels of this hierarchy are:
\begin{itemize}
\item $V_1 = \left\{\emptyset\right\}$, 
\item $V_2 = \left\{\emptyset, \{\emptyset\}\right\}$, 
\item $V_3 = \big\{ \emptyset, \{\emptyset\}, \left\{\{\emptyset\}\right\}, \{\emptyset, \{\emptyset\}\} \big\}$.
\end{itemize}

In the case of the cumulative hierarchy, the `complexity' of a set $x$ is that level $n$ such that $x \in V_{n} \setminus V_{n-1}$; number $n-1$ is also called the \emph{rank} of the set $x$ \cite{Jech:2003ly}. An encoding of hereditarily finite sets w.r.t.~the cumulative hierarchy must place the sets in $V_{n} \setminus V_{n-1}$ after all the sets in $V_{n-1}$, for all $n \geq 1$. One such encoding is the classical Ackermann's encoding \cite{Ack37}, recursively defined as $\Ack(\emptyset) = 0$ and $\Ack(x) = \sum_{y \in x}2^{\Ack(y)}$. However, since $|V_{n}| = 2^{|V_{n-1}|}$ holds for all $n  \geq 1$, any encoding of hereditarily finite sets w.r.t.~the cumulative hierarchy is not feasible in practice---as also noted in \cite{Chiaruttini-Omodeo-08,Kirby}---, since one needs a super-exponential number of bits to represent a set in $V_{n} \setminus V_{n-1}$.

An alternative approach is a combinatorial one. In order to avoid further complications, let us  restrict our exposition to \emph{transitive} sets (a set $x$ is \emph{transitive} if all elements of $x$ are also subsets of $x$), and denote by $T_n$ the family of transitive hereditarily finite sets with at most $n$ elements. In the case of the hierarchy $T_n$, the `complexity' of a transitive hereditarily finite set $x$ is just its cardinality. For this hierarchy, an efficient representation exists, since $|T_n \setminus T_{n-1}|$ is $O(2^{n^2})$, and any set in $T_n \setminus T_{n-1}$ can be ranked/unranked using $O(n^5)$ bit operations \cite{DBLP:journals/ipl/RizziT13}. This result uses a recurrence relation for $|T_n \setminus T_{n-1}|$; the enumeration problem was initially solved in \cite{P62} (see also \cite[p. 123]{Comtet74}), and also by a different method in \cite{DBLP:journals/ipl/PolicritiT11}. Moreover, the asymptotic behavior of this recurrence was recently determined in \cite{DBLP:conf/analco/Wagner12,Wagner:2012fk}.

Even though satisfactory from a combinatorial point of view, this approach lacks set theoretic motivation, as the `complexity' of a set does not reflect a natural set theoretic operation leading to its production (as the cumulative hierarchy does). A possible solution was proposed by Kirby in \cite{Kirby}. Kirby used the well-known fact that hereditarily finite sets can also be obtained from the empty set by repeated use of the \emph{adjunction} (or \emph{adduction}) operator \cite{givant}:
\[\langle x,y \rangle \mapsto x \cup \{y\}.\]
In view of this fact, Kirby proposed the following natural hierarchy of hereditarily finite sets, which he called the \emph{adjunctive hierarchy}:
\[A_0 := \{\emptyset\}, \ \ A_{n+1} := \{\emptyset\} \cup \big\{ x \cup \{y\} \;|\; x,y \in A_n \big\},\]
so that $A_n \subseteq A_{n+1}$ for all $n \in \omega$, and $\bigcup_{n \in \omega}A_n = V_\omega$.\footnote{To be precise, in \cite{Kirby}, the $(n+1)$th level of the hierarchy was defined as $A'_{n+1} := A'_n \cup \left\{ x \cup \{y\} \;|\; x,y \in A'_n \right\}$, where $A'_0 = \{\emptyset\}$; this is equivalent to our definition.} Here, the `complexity' of a set $x$ is analogously taken to be that level $n$ such that either $x \in A_0$ and $n = 0$, or $x \in A_n \setminus A_{n-1}$. Observe that
\begin{itemize}
\item $A_0 = V_1$, $A_1 = V_2$, $A_2 = V_3$, but
\item $|A_3| = 12 < 2^{2^2} = |V_4|$, $|A_4| = 112 < 2^{2^{2^{2}}} = |V_5|$.
\end{itemize}

Thus, the adjunctive hierarchy grows more slowly than the cumulative hierarchy. However, in order to ascertain the feasibility of an encoding of hereditarily finite sets w.r.t.~the adjunctive hierarchy, it is crucial to exactly determine $a_n := |A_n|$, for every $n \in \omega$; this problem was left open in \cite{Kirby}, where the values of $a_n$ for $n \leq 6$ were obtained. 


To begin with, in Sec.~\ref{sec-main} we give a compact recurrence relation for $a_n$, for any $n \in \omega$. This recurrence relation can be implemented by a simple dynamic programming algorithm, which runs in polynomial time in the arithmetic model, thus allowing a fast computation of the values $a_n$ (the values for $n < 10$ are given in Table~\ref{table-main}). Moreover, our method allows us to impose restrictions on the sets making up a level of the adjunctive hierarchy, for example on rank, or on cardinality; this is presented in Sec.~\ref{sec:rank} and~\ref{sec:cardinality}. In Sec.~\ref{sec:urelements} we argue that all of our recurrence relations can be generalized to an analogous adjunctive hierarchy of hereditarily finite sets with atoms, in which $A_0$ also includes an arbitrary finite set of atoms, and $A_0$ is added to every layer of the adjunctive hierarchy.

As we will show in Sec.~\ref{sec:asymptotics}, the adjunctive hierarchy is not able to meet the goal of a hierarchy allowing an efficient encoding of hereditarily finite sets, since $a_n$ is seen to have a number of bits exponential in $n$. More precisely, we prove that $a_n \sim C^{2^n}$ for a constant $C \approx 1.3399$. Nevertheless, in Sec.~\ref{sec:bounded}, we show that the growth of the adjunctive hierarchy can be controlled by any unbounded sublinear function $f$ limiting the maximum level of the hierarchy from where the new adjoined element can be drawn. Our recurrence relation easily generalizes to this context, and in the Appendix we give numerical values for different choices of $f$. This restriction turns out to be a natural lever for slowing down the asymptotic growth of the hierarchy $A_n$.

More importantly, we identify a natural hierarchy $\bar{A}_n$, which we call the \emph{minimally bounded adjunctive hierarchy}, whose asymptotic behavior we prove to be $\Theta(2^n)$. In this hierarchy, in order to make up the $(n+1)$th level, the adjunction of a set $y \in \bar{A}_{m+1}$ to a set $x \in \bar{A}_n$ is allowed only if all the sets in $\p(\bar{A}_m)$ are already present in $\bar{A}_n$; that is, when the adjunction of any set in $\bar{A}_{m}$ to sets in $\bar{A}_n$ does not produce any new set. To the best of our knowledge, this hierarchy is the first to have a slow growth and a natural definition, as it can be seen as a  combination of the adjunctive and cumulative hierarchies. 


\section{The adjunctive hierarchy}
\label{sec-main}

\subsection{Adjunctive rank and basic properties}

We start by presenting some remarks and basic properties that will be used to prove the main result.

\begin{remark}[\!\!\cite{Kirby}]
\label{remark-kirby}
If $x \in A_n$, then $|x| \leq n$, $x \subseteq A_{n-1}$ and for any $y \subseteq x$, $y \in A_n$ holds.
\end{remark}

\begin{remark}
\label{remark-basic}
If $x \in A_n \setminus A_{n-1}$, then $x \subseteq A_{n-1}$ and there exists a $y \in x$ such that $x \setminus \{y\} \in A_{n-1}$.
\end{remark}

The \emph{rank} of a hereditarily finite set $x$ is a well-known measure of its complexity \cite{Jech:2003ly}, and can be defined as 
\[\rk(x) := \min\{n :~ x \in V_n\} - 1.\]

Equivalently stated, the rank of $x$ is either 0, if $x = \emptyset$, or otherwise it can be recursively expressed as
\[\rk(x) := \max\{\rk(y) :~ y \in x\} + 1.\]

In analogy with the usual cumulative hierarchy, we introduce the following definition.

\begin{definition}
For every hereditarily finite set $x$, the \emph{adjunctive rank} of $x$ is defined as
\[\rka(x) := \min\{n :~ x \in A_n\}.\]
Equivalently, $\rka(x)$ is either 0, if $x = \emptyset$, or it is the number $n$ such that $x \in A_n \setminus A_{n-1}$.
\end{definition}

As in the cumulative case, we are interested in finding an equivalent recursive definition of $\rka(x)$ using the set $\{\rka(y): ~ y \in x \}$. This is possible, even if slightly trickier than in the classical case.

\begin{remark}
\label{remark-rka0}
For all hereditarily finite sets $x,y$, the following holds:
\begin{align*}
\max\{\rka(x),\rka(y)\} \leq \rka(x \cup \{y\}) \leq \max\{\rka(x),\rka(y)\} + 1.
\end{align*}
\end{remark}

\begin{remark}
\label{remark-rka1}
If $\rka(x) \leq \rka(y)$, then $\rka(x \cup \{y\}) = \max\{\rka(x),\rka(y)\} + 1 = \rka(y) + 1$.
\end{remark}

Unfortunately, it is not true that when $\rka(x) \geq \rka(y)$ the adjunctive rank of $x \cup \{y\}$ is $\rka(x) + 1$ (e.g.~when $x = \{\{\emptyset\}\}$, $y=\emptyset$). However, for any hereditarily finite set $z$, this can be avoided for a specific choice of sets $x$, $y$ such that $x \cup \{y\} = z$, and this will allow us to state the following recursive definition of $\rka(x)$ from $\{\rka(y): ~ y \in x\}$.

\begin{lemma}
\label{lemma-rka}
If $y = \{x_1, \ldots ,x_n\}$ such that $\rka(x_i) \leq \rka(x_{i+1})$ for all $i \in \{1,\dots,n-1\}$, then $\rka(y) = \max\{\rka(x_j) + n - j:~ 1 \leq j \leq n\} + 1$.
\end{lemma}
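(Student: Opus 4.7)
The plan is to prove the identity by establishing the matching upper and lower bounds. Write $M := \max\{\rka(x_j)+n-j : 1 \le j \le n\}$, so the goal is $\rka(y)=M+1$.

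For the upper bound, I would construct $y$ by successive adjunctions in the \emph{sorted} order. Set $y_0 := \emptyset$ and $y_k := y_{k-1} \cup \{x_k\}$ for $k=1,\dots,n$, so $y_n = y$. I claim, by induction on $k$, that $\rka(y_k) \le M_k + 1$ where $M_k := \max\{\rka(x_j)+k-j : 1 \le j \le k\}$. The base case $k=1$ follows from Remark~\ref{remark-rka1} since $\rka(\emptyset) \le \rka(x_1)$. For the inductive step I split on whether $\rka(y_{k-1}) \le \rka(x_k)$: if yes, Remark~\ref{remark-rka1} yields $\rka(y_k)=\rka(x_k)+1 \le M_k + 1$; if no, Remark~\ref{remark-rka0} yields $\rka(y_k) \le \rka(y_{k-1})+1 \le M_{k-1}+2$, and the desired bound follows from the trivial shift inequality $M_k \ge M_{k-1}+1$ (obtained by restricting the max in $M_k$ to indices $j \le k-1$).

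For the lower bound I would descend instead of ascend. Applying Remark~\ref{remark-basic} iteratively, one builds a chain $y = y_n \supset y_{n-1} \supset \cdots \supset y_0 = \emptyset$, where at each step $y_{k-1} = y_k \setminus \{z_k\}$ for some $z_k \in y_k$ and $\rka(y_{k-1}) \le \rka(y_k) - 1$. In particular $\rka(y_j) \le \rka(y) - (n-j)$ for every $j$. Now fix an index $j \in \{1,\dots,n\}$; because the $x_i$ are sorted by adjunctive rank, at least $n-j+1$ of the elements $x_j, x_{j+1}, \dots, x_n$ have rank $\ge \rka(x_j)$. Since we have removed only $n-j$ elements to reach $y_j$, by pigeonhole $y_j$ still contains some element $z$ with $\rka(z) \ge \rka(x_j)$. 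By Remark~\ref{remark-kirby} applied to $y_j \in A_{\rka(y_j)}$, we get $z \in A_{\rka(y_j)-1}$, hence $\rka(x_j) \le \rka(z) \le \rka(y) - n + j - 1$. Rearranging and taking the max over $j$ gives $\rka(y) \ge M + 1$.

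The main obstacle is the lower bound: one must resist the temptation to argue only about the maximum-rank element, since Remark~\ref{remark-basic} does not let us control \emph{which} element is removed at each step. The sorted-order hypothesis is used precisely to feed a pigeonhole count (``at least $n-j+1$ elements have rank $\ge \rka(x_j)$''), which survives any adversarial choice of removals for at least $n-j$ steps — exactly enough to conclude for each $j$. The endpoint $n=0$ is the trivial case $y=\emptyset$ with empty max, so I would state the lemma for $n \ge 1$ or interpret the max as $-1$ in that degenerate case.
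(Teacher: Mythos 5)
Your proof is correct, and the interesting half of it takes a genuinely different route from the paper's. The paper proves the exact formula by a single induction on $n$: it removes the top element $x_n$, applies the inductive hypothesis, and in the hard case $\rka(y\setminus\{x_n\}) > \rka(x_n)$ it invokes Remark~\ref{remark-basic} to produce an index $i$ with $\max\{\rka(y\setminus\{x_i\}),\rka(x_i)\} < \rka(y)$, applies the inductive hypothesis a second time to $y\setminus\{x_i\}$, and uses sortedness to show $\rka(y\setminus\{x_i\}) \ge \rka(y\setminus\{x_n\})$, closing with Remark~\ref{remark-rka0}. You instead decouple the two inequalities. Your upper bound (ascending adjunctions in sorted order, with the shift inequality $M_k \ge M_{k-1}+1$ absorbing the case $\rka(y_{k-1}) > \rka(x_k)$) is routine and corresponds to the easy direction of the paper's argument. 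Your lower bound is the genuinely new ingredient: iterating Remark~\ref{remark-basic} yields an adversarial descending chain $y = y_n \supset \cdots \supset y_0 = \emptyset$ with $\rka(y_j) \le \rka(y) - (n-j)$ (valid since $y_j \neq \emptyset$ for $j \ge 1$, so the remark always applies), and the pigeonhole count --- of the $n-j+1$ elements $x_j,\dots,x_n$ of rank at least $\rka(x_j)$, at most $n-j$ have been deleted by stage $j$ --- leaves a witness $z \in y_j$ with $\rka(x_j) \le \rka(z) \le \rka(y_j) - 1$ by Remark~\ref{remark-kirby}, giving $\rka(y) \ge \rka(x_j) + n - j + 1$ for each $j$ separately. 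I checked the details and found no gap. What each approach buys: the paper's induction computes the exact adjunctive rank of every sorted initial segment along the way, which is reused elsewhere in spirit (Corollary~\ref{corollary-rka}); your argument avoids the delicate comparison of $y\setminus\{x_i\}$ against $y\setminus\{x_n\}$ entirely, is robust against the fact that Remark~\ref{remark-basic} gives no control over which element is removed, and isolates the quantitative content of the lemma in the clean family of inequalities $\rka(y) \ge \rka(x_j) + n - j + 1$. Your handling of the degenerate case ($n \ge 1$, or empty max read as $-1$) matches the paper's implicit convention, whose induction also starts at $n = 1$.
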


\begin{proof}
We prove the statement by induction on $n$. If $n=1$, the thesis follows from Remark~\ref{remark-rka1} for $\emptyset \cup \{x_1\}$.

Now suppose that $n>1$. From the inductive hypothesis, we have 
\begin{align*}
\rka(y \setminus \{x_n\}) &= \max\{\rka(x_j) + n -1 - j:~ 1 \leq j \leq n-1\} + 1\\
&= \max\{\rka(x_j) + n - j:~ 1 \leq j \leq n-1\}.
\end{align*}

If $\rka(y \setminus \{x_n\}) \leq \rka(x_n)$, from Remark~\ref{remark-rka1} we have 
\begin{align*}
\rka(y) &= \max\left\{\rka(y \setminus \{x_n\}),\rka(x_n) \right\} + 1 \\
&= \max\big\{\max\{\rka(x_j) + n - j :~ 1 \leq j \leq n-1\}, \\
& \phantom{\hspace{.15cm}= \max\big\{}\rka(x_n) + n - n \big\} + 1\\
&= \max\left\{\rka(x_j) + n - j:~ 1 \leq j \leq n\right\} + 1,
\end{align*}
which is the thesis.

Otherwise, $\rka(y \setminus \{x_n\}) > \rka(x_n)$. From Remark~\ref{remark-basic}, we know that there exists an $i$ such that $\rka(y \setminus \{x_i\}) < \rka(y)$. Applying Remark~\ref{remark-rka0} together with $\max\{\rka(y \setminus \{x_i\}), \rka(x_i)\} < \rka(y)$, it follows that
\[
\begin{array}{rcl}
\rka(y) &=& \max\{\rka(y \setminus \{x_i\}), \rka(x_i)\} + 1.
\end{array}
\]
By the inductive hypothesis on $y \setminus \{x_i\}$, we have that 

\begin{align*}
\rka(y \setminus \{x_i\}) = \max\hspace{-.07cm}\big\{&\{\rka(x_j) + n - j :~ 1 \leq j \leq i-1\} \; \cup \\
&\{\rka(x_{j+1}) + n - j:~ i \leq j \leq n-1\}\big\}.
\end{align*}

Since for all $j \in \{i,\dots,n-1\}$, $\rka(x_{j+1}) + n - j \geq \rka(x_j) + n - j$ holds from the assumption that the $x_i$'s form a non-decreasing sequence, this implies that $\rka(y \setminus \{x_i\}) \geq \rka(y \setminus \{x_n\})$. Thus, by the hypothesis on $\rka(x_n)$, it holds that 
\[
\rka(y \setminus \{x_i\}) \geq \rka(y \setminus \{x_n\}) > \rka(x_n) \geq \rka(x_i).
\]

Combining the last inequalities with our assumption on $i$,
\[
\begin{array}{rcl}
\rka(y) &=& \max\{\rka(y \setminus \{x_i\}), \rka(x_i)\} + 1 \\
&=& \rka(y \setminus \{x_i\}) + 1 \\
&\geq& \rka(y \setminus \{x_n\}) + 1 \\
&=&\max\{\rka(y \setminus \{x_n\}),\rka(x_n)\} + 1\\
&=&\max\{\rka(x_j) + n - j :~ 1 \leq j \leq n\} + 1
\end{array}
\]
which, by Remark~\ref{remark-rka0}, proves the thesis.
\end{proof}

In the next sections, Lemma~\ref{lemma-rka} will be used mainly in the form of the following corollary.

\begin{corollary}
\label{corollary-rka}
If $x \subseteq A_{\rka(y)}$, then $\rka(x \cup \{y\}) = \max\{\rka(x),\rka(y)\} + 1$.
\end{corollary}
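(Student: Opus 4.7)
The plan is to derive the corollary from Lemma~\ref{lemma-rka} by applying it to both $x$ and $x \cup \{y\}$ and comparing the resulting expressions. The trivial case $x = \emptyset$ reduces directly to Remark~\ref{remark-rka1}, so I focus on $x \neq \emptyset$; moreover I may assume $y \notin x$, which is the intended adjunctive reading of the statement (if $y \in x$ then $x \cup \{y\} = x$ and the claimed identity cannot hold, so the corollary is implicitly formulated for a freshly adjoined $y$).

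First I would enumerate the elements of $x$ as $z_1, \ldots, z_m$ with $\rka(z_i) \leq \rka(z_{i+1})$. The hypothesis $x \subseteq A_{\rka(y)}$ is precisely what guarantees $\rka(z_i) \leq \rka(y)$ for every $i$, so $z_1, \ldots, z_m, y$ is a valid non-decreasing enumeration of $x \cup \{y\}$, and Lemma~\ref{lemma-rka} applies to both sets with these orderings.

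Then I would write out the two identities side by side. Applied to $x$, the lemma gives
\[\rka(x) = \max\{\rka(z_j) + m - j : 1 \leq j \leq m\} + 1,\]
while applied to $x \cup \{y\}$ (with $n = m + 1$ and $y$ placed last) it gives
\[\rka(x \cup \{y\}) = \max\bigl(\{\rka(z_j) + (m+1) - j : 1 \leq j \leq m\} \cup \{\rka(y)\}\bigr) + 1.\]
Shifting the index contribution from $m - j$ to $(m+1) - j$ adds exactly $1$ to each term of the first set inside the union, so its maximum equals $\rka(x)$, while the singleton contributes $\rka(y)$. Combining these yields $\rka(x \cup \{y\}) = \max\{\rka(x),\rka(y)\} + 1$, as required.

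I do not foresee any substantive obstacle: the corollary is essentially a bookkeeping reformulation of Lemma~\ref{lemma-rka}. The only mildly delicate point is the freedom to place $y$ at the end of the enumeration when some $\rka(z_i)$ equals $\rka(y)$, but this is permitted because Lemma~\ref{lemma-rka} requires only a non-decreasing ordering, not a strictly increasing one.
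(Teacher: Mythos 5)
Your proof is correct and is exactly the derivation the paper intends: the corollary is stated without proof as an immediate consequence of Lemma~\ref{lemma-rka}, obtained precisely as you do by enumerating the elements of $x$ in non-decreasing adjunctive rank with $y$ appended last (legitimate since $x \subseteq A_{\rka(y)}$ gives $\rka(z_i) \leq \rka(y)$ for all $i$), and noting that the index shift from $m-j$ to $(m+1)-j$ raises the maximum over the old terms to exactly $\rka(x)$ while the final term contributes $\rka(y)$. Your side remark that $y \notin x$ must be implicitly assumed is also accurate---for $y \in x$ the identity fails (e.g.\ $x=\{\emptyset\}$, $y=\emptyset$ gives $\rka(x\cup\{y\})=1\neq 2$)---and matches the paper's intended adjunctive reading.
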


\subsection{The recurrence relation for $a_n$}

To develop a recurrence relation for the sequence $a_n$ we first need to define a finer sequence, as in the following definition.

\begin{definition}
For every $n > m \geq 0$, let $B_{n,m}$ be the set $\left\lbrace x \in A_n \setminus A_{n-1} : ~ x \subseteq A_m \right\rbrace$, and let $b_{n,m} := |B_{n,m}|$.
\end{definition}

Notice that $B_{n,n-1} = A_{n} \setminus A_{n-1}$ and $B_{n,0} = \emptyset$ for all $n > 1$; we shall extend for convenience the definitions of $B_{n,m}$ and, accordingly, of $b_{n,m}$, to all integers $n,m$, by assuming $A_n = \emptyset$ if $n<0$. Therefore, we can write
\[a_{n} = \sum_{k = 0}^{n}b_{k,k-1}.\]
For the sequence $b_{n,m}$ a compact recurrence relation can be provided.

\begin{theorem}
\label{thm-main}
For all natural numbers $n > m \geq 0$, the following recurrence relation holds
\begin{align}
b_{n,m} &= b_{n,m-1} + \sum_{k=1}^{n-m-1} \left( b_{n-k,m-1} \binom{b_{m,m-1}}{k} \right) + \binom{b_{m,m-1}}{n-m} \sum_{k=0}^m b_{k,k-1},\label{eq:rec}
\end{align}
where $b_{0,-1} = 1$, and $b_{n,-1} = 0$, for all $n \geq 1$ (we assume that $\binom{a}{b} = 0$ if $a < b$).
\end{theorem}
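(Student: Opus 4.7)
The plan is to count $B_{n,m}$ by partitioning each $x \in B_{n,m}$ into its elements of adjunctive rank strictly below $m$ and its elements of rank exactly $m$. Writing $x = x' \cup y$ with $x' = x \cap A_{m-1}$ and $y = x \cap (A_m \setminus A_{m-1})$, the second piece is a $k$-element subset of $A_m \setminus A_{m-1}$ for some $k \geq 0$, so there are $\binom{b_{m,m-1}}{k}$ choices for $y$ once $k$ is fixed. The task then reduces to counting admissible $x'$ for each value of $k$.

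For $k = 0$ we simply have $x = x' \in B_{n,m-1}$, contributing $b_{n,m-1}$. For $k \geq 1$, list $y = \{y_1, \ldots, y_k\}$ arbitrarily and build $x$ iteratively by $x_0 = x'$ and $x_i = x_{i-1} \cup \{y_i\}$. Since $\rka(y_i) = m$ and $x_{i-1} \subseteq A_m$ at every stage, Corollary \ref{corollary-rka} yields $\rka(x_i) = \max\{\rka(x_{i-1}), m\} + 1$, from which a one-line induction gives
\[
\rka(x) \;=\; \begin{cases} \rka(x') + k & \text{if } \rka(x') \geq m+1, \\ m + k & \text{if } \rka(x') \leq m. \end{cases}
\]
Imposing $\rka(x) = n$ in the first regime forces $\rka(x') = n - k$ with $1 \leq k \leq n - m - 1$ and $x' \in B_{n-k,m-1}$, contributing $\sum_{k=1}^{n-m-1} \binom{b_{m,m-1}}{k} b_{n-k,m-1}$. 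In the second regime it forces $k = n - m$; by Remark \ref{remark-kirby} the $x'$ satisfying $\rka(x') \leq m$ and $x' \subseteq A_{m-1}$ are exactly the $a_m = \sum_{k=0}^{m} b_{k,k-1}$ elements of $A_m$, since $x' \in A_m$ automatically implies $x' \subseteq A_{m-1}$. Summing the three contributions yields \eqref{eq:rec}; the boundary values $b_{0,-1} = 1$ and $b_{n,-1} = 0$ for $n \geq 1$ follow from the convention $A_{-1} = \emptyset$ together with $\emptyset \in A_0$.

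The main obstacle is the rank calculation, and specifically the dichotomy driven by whether $\rka(x')$ already exceeds $m$: above this threshold each adjunction of a new rank-$m$ element bumps the overall rank by one, whereas below it the very first adjunction jumps the rank to $m+1$ and only subsequent ones increment. Handling this split cleanly via Corollary \ref{corollary-rka} is exactly what produces the two structurally different summands on the right-hand side of \eqref{eq:rec}; once this dichotomy is in place, the enumeration is immediate.
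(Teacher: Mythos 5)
Your proof is correct and follows essentially the same route as the paper's: partitioning $B_{n,m}$ by $k = |x \cap (A_m \setminus A_{m-1})|$ and iterating Corollary~\ref{corollary-rka} to pin down adjunctive ranks. Your explicit dichotomy $\rka(x) = \max\{\rka(x'), m\} + k$ is a slightly tidier packaging, since it simultaneously yields the bound $k \leq n-m$ (which the paper argues separately by contradiction) and identifies the admissible $x'$ in both regimes, but the underlying argument is the same.
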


\begin{proof}
The recurrence relation above follows from partitioning the set $B_{n,m}$ into sets $B_{n,m,k}$, for $k \in \{0,\dots,n-m\}$:
\[
B_{n,m,k} := \left\lbrace x \in B_{n,m} : ~ |x \cap (A_m \setminus A_{m-1})| = k \right\rbrace.
\]

This is indeed a partition of $B_{n,m}$, since for any $x \in B_{n,m}$, $|x \cap (A_m \setminus A_{m-1})| \leq n - m$ holds. Otherwise, denote $|x \cap (A_m \setminus A_{m-1})|$ by $l$, and suppose that $l \geq n-m+1$. Therefore, $x \subseteq A_m$ is obtained by adjoining $l$ elements in $A_m \setminus A_{m-1}$ to a set in some $A_d$; by Corollary~\ref{corollary-rka} (applied $l$ times), we have that $d \leq n-l \leq m-1$. However, by the first adjunction of such an element in $A_m \setminus A_{m-1}$, we obtain, by Remark~\ref{remark-rka1}, a set in $A_{m+1} \setminus A_m$. By the adjunction of the remaining $l - 1 \geq n-m$ elements to it, we obtain, by Corollary~\ref{corollary-rka}, that the set $x$ is in $A_{l+m} \setminus A_{l+m-1}$. Since $l + m \geq n + 1$, this contradicts the assumption that $x \in B_{n,m}$.

By definition, $B_{n,m,0} = B_{n,m-1}$, which gives the first term in the recurrence relation.

For $k \in \{1,\dots,n-m-1\}$, every set $x \in B_{n,m,k}$ is obtained by adjoining $k$ sets in $A_m \setminus A_{m-1}$, that can be chosen in $\binom{b_{m,m-1}}{k}$ ways, to a set $y \in A_{n-k}$ such that $y \subseteq A_{m-1}$. Observe that $y \notin A_{n-k-1}$, since adjoining $k$ elements in $A_m$, where $m \leq n-k-1$, to a set in $A_{n-k-1}$ would result in a set in $A_{n-1}$, by the definition of the hierarchy. Therefore, such sets $y$ are precisely those in $B_{n-k,m-1}$, hence $|B_{n,m,k}| = b_{n-k,m-1} \binom{b_{m,m-1}}{k}$, giving the next term in the recurrence relation.

Similarly, if $k = n-m$, every set $x \in B_{n,m,n-m}$ is obtained adjoining $n-m$ sets in $A_m \setminus A_{m-1}$, that can be chosen in $\binom{b_{m,m-1}}{n-m}$ ways, to a set $y \in A_m$. In this case, however, any set in $A_m$ can be used as $y$, since adjoining $n-m$ sets in $A_m \setminus A_{m-1}$ to a set in $A_m$ will always produce a set in $A_n \setminus A_{n-1}$, by Corollary~\ref{corollary-rka}. This gives the last term in the recurrence relation.
\end{proof}

This result can be implemented in an algorithm, to obtain the numeric values shown in Table~\ref{table-adj}.

%

\section{Asymptotic behavior}
\label{sec:asymptotics}

In this section, we are interested in the asymptotic behavior of the sum
$$a_n = \sum_{k=0}^n b_{k,k-1}.$$
To this end, we first study the asymptotics of $c_n := b_{n,n-1}$. First we note that
\begin{equation}\label{eq:increasing}
b_{n,-1} \leq b_{n,0} \leq \cdots \leq b_{n,n-1}
\end{equation}
by definition (since $B_{n,m-1} \subseteq B_{n,m}$), which can also be seen from the fact that all the terms in~\eqref{eq:rec} are nonnegative. Moreover, for $m = n-1$, one of the terms on the right hand side of~\eqref{eq:rec} is $b_{n-1,n-2}^2$ (corresponding to $k=m=n-1$ in the second sum), so that
$$c_n = b_{n,n-1} \geq b_{n-1,n-2}^2 = c_{n-1}^2.$$
It turns out that this is in fact the dominant term (combinatorially speaking, this means that ``most'' of the elements of $A_n \setminus A_{n-1}$ are obtained by adjoining an element of $A_{n-1} \setminus A_{n-2}$ to another set in $A_{n-1} \setminus A_{n-2}$), as the following lemma shows:

\begin{lemma}\label{lem:main_lemma}
The sequence $c_n = b_{n,n-1}$ satisfies
$$c_n = c_{n-1}^2 \left( 1 + \Oh(1/c_{n-2}) \right).$$
Specifically,
$$c_{n-1}^2 \leq c_n \leq c_{n-1}^2 \left( 1 + \frac{4}{c_{n-2}} \right)$$
for $n \geq 2$.
\end{lemma}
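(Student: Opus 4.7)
The plan is to derive both bounds from the recurrence~\eqref{eq:rec}. Setting $m = n-1$ collapses the middle sum to emptiness (since $n-m-1 = 0$) and simplifies the last term of~\eqref{eq:rec} to $\binom{c_{n-1}}{1}\sum_{k=0}^{n-1} c_k = c_{n-1}\,a_{n-1}$. Splitting $a_{n-1} = c_{n-1}+a_{n-2}$ then gives the identity
\[
c_n = c_{n-1}^{2} + c_{n-1}\,a_{n-2} + b_{n,n-2},
\]
from which the lower bound $c_n\geq c_{n-1}^{2}$ is immediate by nonnegativity of the last two summands.

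For the upper bound, I would first establish an auxiliary estimate $a_k\leq 2c_k$ for every $k\geq 0$: the base cases $k\leq 2$ follow from Table~\ref{table-main}, and for $k\geq 3$ one uses $a_k = a_{k-1}+c_k \leq 2c_{k-1}+c_k$ together with $2c_{k-1}\leq c_{k-1}^{2}\leq c_k$, both valid once $c_{k-1}\geq 2$. This yields $c_{n-1}a_{n-2}\leq 2c_{n-1}c_{n-2}\leq 2c_{n-1}^{2}/c_{n-2}$, where the second step uses the already-proved $c_{n-1}\geq c_{n-2}^{2}$. To handle $b_{n,n-2}$, I would apply~\eqref{eq:rec} again with $m = n-2$:
\[
b_{n,n-2} = b_{n,n-3} + c_{n-2}\,b_{n-1,n-3} + \binom{c_{n-2}}{2}\,a_{n-2},
\]
and bound each ``main'' contribution individually: $\binom{c_{n-2}}{2}a_{n-2}\leq c_{n-2}^{3}\leq c_{n-1}^{2}/c_{n-2}$ (using $c_{n-1}\geq c_{n-2}^{2}$ twice) and $c_{n-2}\,b_{n-1,n-3}\leq c_{n-2}c_{n-1}\leq c_{n-1}^{2}/c_{n-2}$ (using the trivial bound $b_{n-1,n-3}\leq c_{n-1}$). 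Together with the estimate for $c_{n-1}a_{n-2}$, this accounts for $4c_{n-1}^{2}/c_{n-2}$ of the stated budget.

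The main obstacle is absorbing the residual term $b_{n,n-3}$, which is zero for $n\leq 5$ but strictly positive starting at $n=6$ and so cannot simply be discarded. The plan is to iterate~\eqref{eq:rec} one more time, with $m=n-3$, and verify that every newly produced summand is much smaller than $c_{n-1}^{2}/c_{n-2}$ by exploiting the doubly-exponential chain $c_{n-1}\geq c_{n-2}^{2}\geq c_{n-3}^{4}\geq\cdots$, which shows that any $c_{n-k}^{j}$ appearing in the expansion decays rapidly relative to $c_{n-1}^{2}/c_{n-2}$. A small strengthening of the inductive hypothesis---for instance, proving $b_{k,k-2}\leq 2c_{k-1}c_{k-2}$ rather than the trivial $b_{k,k-2}\leq c_k$---creates extra slack in the previous step that suffices to swallow the $b_{n,n-3}$ contribution and close the induction at the claimed constant $4$.
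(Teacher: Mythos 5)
Your lower bound and the first layer of your upper bound are correct and essentially coincide with the paper's own opening moves: the identity $c_n = c_{n-1}^2 + c_{n-1}a_{n-2} + b_{n,n-2}$ is the $m=n-1$ specialization of~\eqref{eq:rec}, your auxiliary estimate $a_k \leq 2c_k$ is a legitimate substitute for the paper's inequality $\sum_{k=0}^{n-2} c_k \leq c_{n-1}/c_{n-2}$, and your bounds $c_{n-1}a_{n-2} \leq 2c_{n-1}^2/c_{n-2}$, $\binom{c_{n-2}}{2}a_{n-2} \leq c_{n-1}^2/c_{n-2}$ and $c_{n-2}\,b_{n-1,n-3} \leq c_{n-1}^2/c_{n-2}$ all check out against the chain $c_{n-1} \geq c_{n-2}^2$. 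But at that point your budget of $4c_{n-1}^2/c_{n-2}$ is spent exactly, and the entire difficulty of the lemma---absorbing the residual $b_{n,n-3}$---is deferred to a closing paragraph that is a plan rather than a proof. That is where the genuine gap lies, and it is not a small one: it is precisely the part of the argument where the paper does its technical work.

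Two things go wrong in your sketch. First, iterating~\eqref{eq:rec} ``one more time'' with $m=n-3$ terminates nothing: it produces a new residual $b_{n,n-4}$, then $b_{n,n-5}$, and so on, so to obtain a closed expression you must unroll all the way to $m=0$ (where $b_{n,-1}=0$). This is what the paper does, arriving at $b_{n,n-2} = \sum_{r=0}^{n-2}\sum_{k=1}^{n-r-1} b_{n-k,r-1}\binom{c_r}{k} + \sum_{r=0}^{n-2}\binom{c_r}{n-r}\sum_{k=0}^{r} c_k$, and the real labor is then bounding this uniformly in $r$ and $k$ via the inequalities $c_{n-k}^{2^k} \leq c_n$ and $2^k c_{n-k} \leq c_n$ together with convergent auxiliary series (the paper's $S_1$, $S_2$, $S_3$ yield $b_{n,n-2} \leq 3c_{n-1}^2/c_{n-2}$ in one stroke, with the remaining $c_{n-1}^2/c_{n-2}$ coming from the first layer). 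Your ``doubly-exponential chain'' remark points in the right direction, but none of these uniform estimates is carried out, and they do not follow from a single extra iteration. Second, the proposed strengthened hypothesis $b_{k,k-2} \leq 2c_{k-1}c_{k-2}$ does not close any induction as stated: attempting to prove it for $k=n$ via your identity $b_{n,n-2} = b_{n,n-3} + c_{n-2}\,b_{n-1,n-3} + \binom{c_{n-2}}{2}a_{n-2}$ reintroduces $b_{n,n-3}$, which lies on a different diagonal of the array $b_{n,m}$ and is not covered by the hypothesis, so you are back to the same unabsorbed term. Any repair requires an inductive statement controlling the whole family $b_{n,m}$ for $m \leq n-2$---which is, in effect, a reconstruction of the paper's full unrolling.
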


\begin{proof}
The inequality $c_{n-1}^2 \leq c_n$ has already been mentioned. It also follows from this inequality that $c_{n-k}^{2^k} \leq c_n$ for $0 \leq k \leq n$, and since $c_2 = 2$, we have $c_n \geq 2^{2^{n-2}}$ for all $n \geq 2$ by a simple induction, which means that $c_n$ grows quite rapidly. Finally, we have $c_n \geq c_{n-1}^2 \geq 2c_{n-1}$ for $n \geq 3$ (and also $c_2 = 2 \geq 2 = 2c_1$), hence $2^k c_{n-k} \leq c_n$ for $n \geq 2$ and $k \leq n-1$.

It remains to prove the second inequality. The special case $m=n-1$ in~\eqref{eq:rec} yields
$$c_n = b_{n,n-1} = b_{n-1,n-2} \sum_{k=0}^{n-1} b_{k,k-1} + b_{n,n-2}$$
and thus also (replacing $n$ by $n-1$)
\begin{align}
c_{n-1} &= b_{n-1,n-2} \nonumber\\
&= b_{n-2,n-3} \sum_{k=0}^{n-2} b_{k,k-1} + b_{n-1,n-3} \nonumber\\
& \geq b_{n-2,n-3} \sum_{k=0}^{n-2} b_{k,k-1},\label{eq:sum_ineq}
\end{align}
from which we obtain
\begin{align*}
c_n &\leq b_{n-1,n-2} \left( b_{n-1,n-2} + \frac{b_{n-1,n-2}}{b_{n-2,n-3}} \right) + b_{n,n-2} \\
& = c_{n-1}^2 \left( 1 + \frac{1}{c_{n-2}} \right) + b_{n,n-2}.
\end{align*}
Moreover,~\eqref{eq:sum_ineq} implies
\begin{equation}\label{eq:sumest}
\sum_{k=0}^{n-2} c_k = \sum_{k=0}^{n-2} b_{k,k-1} \leq \frac{c_{n-1}}{c_{n-2}},
\end{equation}
which we will use later. To complete the proof, we have to show that
$$b_{n,n-2} \leq \frac{3c_{n-1}^2}{c_{n-2}}.$$
To this end, we iterate our recursion~\eqref{eq:rec} to obtain
\begin{align*}
b_{n,n-2} &=  b_{n,n-3} + \sum_{k=1}^{1} \left( b_{n-k,n-3} \binom{b_{n-2,n-3}}{k} \right) + \binom{b_{n-2,n-3}}{2} \sum_{k=0}^{n-2} b_{k,k-1} \\
&= b_{n,n-4} + \cdots \\
&= \sum_{r=0}^{n-2} \sum_{k=1}^{n-r-1} \left( b_{n-k,r-1} \binom{b_{r,r-1}}{k} \right) + \sum_{r=0}^{n-2} \binom{b_{r,r-1}}{n-r} \sum_{k=0}^r b_{k,k-1}.
\end{align*}
We split this into three parts: $k=1$ in the first sum gives us
$$S_1 = \sum_{r=0}^{n-2} b_{n-1,r-1} b_{r,r-1} \leq b_{n-1,n-2} \sum_{r=0}^{n-2} b_{r,r-1} \leq \frac{c_{n-1}^2}{c_{n-2}}$$
by~\eqref{eq:increasing} and~\eqref{eq:sumest}. The other terms of the first sum taken together can be estimated as follows (note that $\binom{b_{r,r-1}}{k} = 0$ for $r=0$ or $r=1$ now, since $b_{0,-1} = b_{1,0} =1$):
\begin{align*}
S_2 &= \sum_{r=0}^{n-2} \sum_{k=2}^{n-r-1} \left( b_{n-k,r-1} \binom{b_{r,r-1}}{k} \right) \\
&= \sum_{k=2}^{n-3} \sum_{r=2}^{n-k-1} \left( b_{n-k,r-1} \binom{b_{r,r-1}}{k} \right) \\
&\leq \sum_{k=2}^{n-3} \sum_{r=2}^{n-k-1} c_{n-k} \frac{c_r^k}{k!},
\end{align*}
since $b_{n-k,r-1} \leq b_{n-k,n-k-1} = c_{n-k}$ by~\eqref{eq:increasing} and $b_{r,r-1} = c_r$.
Making use of the inequalities $c_{n-k}^{2^k} \leq c_n$ and $2^k c_{n-k} \leq c_n$ mentioned at the beginning of the proof, it follows that
\begin{align*}
S_2 &\leq \sum_{k=2}^{n-3} \frac{c_{n-k}}{k!} \sum_{r=2}^{n-k-1} \left( c_{n-k-1} 2^{r+k-n+1} \right)^k \\
&\leq \sum_{k=2}^{n-3} \frac{c_{n-k}}{k!} c_{n-k-1}^k \sum_{\ell=0}^{\infty} 2^{-\ell k} \\
&\leq \sum_{k=2}^{n-3} \frac{c_{n-2}}{k!} c_{n-k-1}^{2^{k-1}} \cdot \frac{1}{1-2^{-k}} \\
&\leq c_{n-2} c_{n-2} \sum_{k=2}^{\infty} \frac{1}{k!(1-2^{-k})} \\
&< c_{n-2}^2 \leq \frac{c_{n-1}^2}{c_{n-2}}.
\end{align*}
Finally, we have
\begin{align*}
S_3 &= \sum_{r=0}^{n-2}\binom{b_{r,r-1}}{n-r} \sum_{k=0}^r b_{k,k-1} \\
& = \sum_{k=0}^{n-2} \sum_{r=k}^{n-2} \binom{c_r}{n-r} c_k \leq  \sum_{k=0}^{n-2} \sum_{r=0}^{n-2} \frac{c_r^{n-r}}{(n-r)!} c_k \\
&\leq \sum_{k=0}^{n-2} c_k \sum_{r=0}^{n-2} \frac{c_r^{2^{n-r-1}}}{(n-r)!}
\leq \sum_{k=0}^{n-2} c_k \sum_{r=0}^{n-2} \frac{c_{n-1}}{(n-r)!} \\
&\leq c_{n-1} \sum_{k=0}^{n-2} c_k \sum_{\ell=2}^{\infty} \frac{1}{\ell!}
< c_{n-1} \sum_{k=0}^{n-2} c_k \leq \frac{c_{n-1}^2}{c_{n-2}},
\end{align*}
again by~\eqref{eq:sum_ineq} and the inequalities $c_{n-k}^{2^k} \leq c_n$ and $2^k c_{n-k} \leq c_n$. The desired inequality follows by adding the three parts $S_1$, $S_2$ and $S_3$.
\end{proof}

Now we are ready to prove the main asymptotic formula:

\begin{theorem}\label{thm:asymp}
We have
$$c_n = C^{2^n} + \Oh \left(C^{2^{n-1}}\right)$$
for a constant $C \approx 1.339899757746$.
\end{theorem}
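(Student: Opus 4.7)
The plan is to take logarithms of the recursion in Lemma~\ref{lem:main_lemma} and study the normalized sequence $L_n := 2^{-n} \log c_n$. Writing $c_n = c_{n-1}^2(1+\epsilon_n)$ with $0 \leq \epsilon_n \leq 4/c_{n-2}$ for $n \geq 2$, I obtain the telescoping relation
\[ L_n - L_{n-1} = \frac{\log(1+\epsilon_n)}{2^n}, \]
so $(L_n)$ is nondecreasing. Since $c_n \geq 2^{2^{n-2}}$ (already established in the preceding proof), the perturbations $\epsilon_n$ decay doubly exponentially, and the series $\sum_n 2^{-n}\log(1+\epsilon_n)$ converges absolutely. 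Denoting its limit by $L$ and setting $C := e^L$ defines the constant of the theorem; numerically, $C$ can be approximated to arbitrary precision by truncating the series at a moderate index, yielding $C \approx 1.339899757746$.

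Next I would convert convergence into a quantitative rate. Using $\log(1+\epsilon_k) \leq \epsilon_k \leq 4/c_{k-2}$ together with the iterated inequality $c_{k-2} \geq c_{n-1}^{2^{k-n-1}}$ (obtained from $c_m \geq c_{m-1}^2$), the tail splits into a dominant $k = n+1$ term plus a geometrically smaller remainder, giving
\[ 0 \leq L - L_n \;=\; \sum_{k > n} \frac{\log(1+\epsilon_k)}{2^k} \;=\; \Oh\!\left(\frac{1}{2^n c_{n-1}}\right). \]
Multiplying by $2^n$ yields $\log c_n = 2^n \log C - \delta_n$ with $\delta_n = \Oh(1/c_{n-1})$, and hence $c_n = C^{2^n} e^{-\delta_n}$.

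To finish, I apply the same identity one step earlier to obtain $c_{n-1} = C^{2^{n-1}} e^{-\delta_{n-1}}$ with $\delta_{n-1} \to 0$; in particular $c_{n-1} \geq \frac{1}{2} C^{2^{n-1}}$ for all sufficiently large $n$. Combining this lower bound with a first-order expansion of $e^{-\delta_n}$ gives
\[ C^{2^n} - c_n \;=\; C^{2^n}\bigl(1 - e^{-\delta_n}\bigr) \;=\; \Oh\!\left(\frac{C^{2^n}}{c_{n-1}}\right) \;=\; \Oh\!\left(C^{2^{n-1}}\right), \]
which is the claimed asymptotic.

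The main obstacle I anticipate is that a naive tail bound produces an error of order $\Oh(C^{2^n}/c_{n-1})$ after exponentiation, and this collapses to the desired $\Oh(C^{2^{n-1}})$ only if one has a matching \emph{lower} bound $c_{n-1} = \Omega(C^{2^{n-1}})$. Thus the argument must be bootstrapped one level: the rate estimate at level $n$ is turned into the sharp asymptotic only after the same estimate has been applied at level $n-1$. A secondary subtlety is verifying that the tail beyond $k = n+1$ is truly negligible compared with the $k = n+1$ term; this uses the double-exponential (not merely exponential) growth of $c_{k-2}$, which is also what makes the constant $C$ well-defined in the first place.
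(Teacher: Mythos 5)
Your proposal is correct and takes essentially the same route as the paper's proof: the paper likewise sets $u_n = \log c_n$ (your $2^nL_n$ after normalization), defines $C = \exp\left(\sum_{k\geq 2} 2^{-k} r_k\right)$ with $r_k = \log(1+\epsilon_k)$, bounds the tail of the series by $\Oh(1/c_{n-1})$, and performs the same bootstrap through $c_{n-1}\geq C^{2^{n-1}}(1+4/c_{n-2})^{-1}$ to turn the relative error $\Oh(1/c_{n-1})$ into the stated $\Oh(C^{2^{n-1}})$ absolute error. The only cosmetic difference is in the tail estimate, where you isolate the dominant $k=n+1$ term using the double-exponential growth of $c_k$, while the paper bounds every tail term uniformly by $\log(1+4/c_{n-1})$ and sums the geometric series; both yield the same bound.
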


\begin{proof}
Set $u_n = \log c_n$. By Lemma~\ref{lem:main_lemma}, we have
$$u_n = 2 u_{n-1} + r_n,$$
with $0 \leq r_n \leq \log \left( 1 + 4/c_{n-2} \right)$ for $n \geq 2$. Iterating yields
\begin{align*}
u_n &= 2^{n-1} u_1 + \sum_{k=2}^n 2^{n-k}r_k = 2^n \sum_{k=2}^n 2^{-k} r_k \\
&= 2^n \sum_{k=2}^{\infty} 2^{-k} r_k - \sum_{k=n+1}^{\infty} 2^{n-k} r_k \\
& = 2^n \sum_{k=2}^{\infty} 2^{-k} r_k - \sum_{\ell=1}^{\infty} 2^{-\ell} r_{n+\ell}.
\end{align*}
Set $$C = \exp \left( \sum_{k=2}^{\infty} 2^{-k} r_k \right)\approx 1.339899757746.$$
Since $0 \leq r_{n+\ell} \leq \log (1 + 4/c_{n+\ell-2} ) \leq \log (1+4/c_{n-1})$ for $\ell \geq 1$, we have
$$0 \leq \sum_{\ell=1}^{\infty} 2^{-\ell} r_{n+\ell} \leq \sum_{\ell=1}^{\infty} 2^{-\ell} \log (1+4/c_{n-1}) = \log (1+4/c_{n-1}),$$
thus
$$2^n \log C \geq u_n \geq 2^n \log C - \log (1+4/c_{n-1})$$
and consequently
$$C^{2^n} \geq c_n = e^{u_n} \geq C^{2^n} \left(1 + \frac{4}{c_{n-1}} \right)^{-1}.$$
So $c_n \sim C^{2^n}$, and more precisely
$$c_n = C^{2^n} \left( 1 + \Oh \left( C^{-2^{n-1}} \right) \right) = C^{2^n} + \Oh \left( C^{2^{n-1}} \right),$$
completing our proof.
\end{proof}
The following corollary is now immediate:

\begin{corollary}
The sequence $a_n = \sum_{k=0}^n b_{k,k-1} = \sum_{k=0}^n c_k$ is asymptotically given  by
$$a_n = C^{2^n} + \Oh \left( C^{2^{n-1}} \right).$$
\end{corollary}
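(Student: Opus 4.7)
The plan is to observe that because $c_n = b_{n,n-1}$ grows doubly exponentially, the sum $a_n = \sum_{k=0}^n c_k$ is dominated by its last term $c_n$, and everything below $c_{n-1}$ contributes only a lower-order error. Concretely, I would split
\begin{equation*}
a_n \;=\; c_n \;+\; c_{n-1} \;+\; \sum_{k=0}^{n-2} c_k,
\end{equation*}
apply Theorem~\ref{thm:asymp} to the first term to get $c_n = C^{2^n} + \Oh(C^{2^{n-1}})$, and argue that the remaining two pieces are both $\Oh(C^{2^{n-1}})$.

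For the second piece, Theorem~\ref{thm:asymp} applied at $n-1$ gives $c_{n-1} = \Oh(C^{2^{n-1}})$ directly. For the tail $\sum_{k=0}^{n-2} c_k$, I would invoke the inequality~\eqref{eq:sumest} proved inside Lemma~\ref{lem:main_lemma}, namely
\begin{equation*}
\sum_{k=0}^{n-2} c_k \;\leq\; \frac{c_{n-1}}{c_{n-2}},
\end{equation*}
and note that $c_{n-2} \geq 2$ for $n \geq 4$ (from $c_n \geq 2^{2^{n-2}}$, also established in that proof). Hence $\sum_{k=0}^{n-2} c_k \leq c_{n-1}/2$, which is again $\Oh(C^{2^{n-1}})$ by Theorem~\ref{thm:asymp}.

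Adding the three bounds yields
\begin{equation*}
a_n \;=\; C^{2^n} + \Oh(C^{2^{n-1}}),
\end{equation*}
as claimed. There is no real obstacle here: all the analytic work is already contained in Lemma~\ref{lem:main_lemma} and Theorem~\ref{thm:asymp}, and the only point worth stating explicitly is that the tail $\sum_{k<n} c_k$ is absorbed into the existing error term thanks to the doubly-exponential growth rate of $c_n$.
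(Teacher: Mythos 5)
Your proof is correct, but it takes a slightly different route from the paper's. The paper bounds the entire tail $\sum_{k=0}^{n-1} c_k$ by applying Theorem~\ref{thm:asymp} termwise, writing $c_k = \Oh(C^{2^k})$, factoring out $C^{2^{n-1}}$, and summing the resulting series $\sum_{k} C^{2^k - 2^{n-1}}$, which converges since $2^{n-1}-2^k \geq n-1-k$ makes it dominated by a geometric series $\sum_\ell C^{-\ell}$ with $C>1$. You instead peel off $c_{n-1}$ separately and dispose of the remaining tail with the inequality \eqref{eq:sumest}, $\sum_{k=0}^{n-2} c_k \leq c_{n-1}/c_{n-2} \leq c_{n-1}/2$, so that Theorem~\ref{thm:asymp} is invoked only at indices $n$ and $n-1$ and no series manipulation is needed. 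Your version is marginally more economical and in fact yields the cleaner explicit bound $a_n \leq c_n + c_{n-1}\left(1 + 1/c_{n-2}\right)$, at the price of reaching inside the proof of Lemma~\ref{lem:main_lemma}: \eqref{eq:sumest} is an intermediate inequality established there rather than a stated result, so strictly you should either promote it to a standalone lemma or re-derive it (one line from the recurrence \eqref{eq:rec} with $m=n-2$). The paper's argument, by contrast, depends only on the statement of Theorem~\ref{thm:asymp} and on $C>1$, which makes it self-contained and robust to how the error bound for $c_n$ was obtained. Both arguments are complete; the small checks you cite ($c_{n-1} = \Oh(C^{2^{n-1}})$ and $c_{n-2} \geq 2$ for $n \geq 4$, from $c_n \geq 2^{2^{n-2}}$) are valid.
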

\begin{proof}
We have
\begin{align*}
a_n &= c_n + \sum_{k = 0}^{n-1} c_k = c_n + \Oh \left( \sum_{k=0}^{n-1} C^{2^k} \right) \\
&= c_n + \Oh \left( C^{2^{n-1}} \sum_{k=0}^{n-1} C^{2^k-2^{n-1}} \right) \\
&= c_n + \Oh \left( C^{2^{n-1}} \sum_{\ell=0}^{\infty} C^{-\ell} \right) = C^{2^n} + \Oh \left( C^{2^{n-1}} \right).
\end{align*}
\end{proof}

\begin{remark}
The series representation
$$C = \exp \left( \sum_{k=2}^{\infty} 2^{-k} r_k \right)$$
for the constant $C$ converges quite rapidly, so it can be computed with high accuracy.
\end{remark}

\section{Refinements and generalizations}

\begin{table*}[t]
\small
\caption{Small values of $r^t_n$.\label{table-rank}}
\begin{center}
\begin{tabular}{|l|llllllll|}
\hline
$n$ & $r^0_n$&$r^1_n$& $r^2_n$ & $r^3_n$ & $r^4_n$ & $r^5_n$ & $r^6_n$ & $r^7_n$\\
\hline
0  &	1 & & & & & & & \\
1  &	1 &	1 & & & & & & \\
2  &	1 &	1 &	2 & & & & & \\
3  &	1 &	1 &	2 &	8 & & & & \\
4  &	1 &	1 &	2 &	12 &	96 & & & \\
5  &	1 &	1 &	2 &	12 &	912 &	10752 & & \\
6  &	1 &	1 &	2 &	12 &	3840 &	10130688 &	125583360 & \\
7  &	1 &	1 &	2 &	12 &	10696 &	34070972672 &	1374608250580992 &	17043910396477440 \\\hline
\end{tabular}
\end{center}
\end{table*}%

\subsection{Refining by rank}
\label{sec:rank}

The value $b_{n,n-1} = |A_n \setminus A_{n-1}|$ computed in the previous section is the number of hereditarily finite sets in $A_n$ having adjunctive rank $n$. Therefore, it is interesting to analogously determine how many elements in $A_n$ have a certain rank in the classical sense.

Observe first that for any $x \in A_n$, $\rk(x) \leq n$ holds. With this purpose we introduce the following definition.


\begin{definition}
For every $n > m \geq 0$, and every $0 \leq t \leq m+1$, let $R_{n,m}^{t} := \{x \in A_{n} \setminus A_{n-1} :~ x \subseteq A_{m} \wedge \rk(x) \leq t\}$, and let $r_{n,m}^{t} := |R_{n,m}^{t}|$.
\end{definition}

Let $r_{n}^t := |\{x \in A_{n} :~ \rk(x) = t\}|$, and observe that $r_{n}^t = \sum_{m=0}^n ( r^t_{m,m-1} - r^{t-1}_{m,m-1}) $. The values $r_{n,m}^t$ satisfy a relation completely analogous to the one in Theorem~\ref{thm-main}; numerical values of $r_n^t$ for small values of $n$ and $t$ are shown in Table~\ref{table-rank}.

\begin{theorem}
\label{thm-rank}
For all natural numbers $n > m \geq 0$, and $0 \leq t \leq m+1$, the following recurrence relation holds
\begin{align*}
r^t_{n,m} &= r^t_{n,m-1} + \sum_{k=1}^{n-m-1} \left( r^t_{n-k,m-1} \binom{r^{t-1}_{m,m-1}}{k} \right) + \binom{r^{t-1}_{m,m-1}}{n-m} \sum_{k=0}^m r^t_{k,k-1},
\end{align*}
where $r^t_{0,-1} = 1$, and $r^t_{n,-1} = 0$, for all $n \geq 1$ and $t \geq 0$.
\end{theorem}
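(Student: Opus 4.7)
The plan is to follow the proof of Theorem~\ref{thm-main} verbatim at the level of set-theoretic bookkeeping, adding only a tracking of classical rank alongside adjunctive rank. The key structural observation is that the combinatorial partition used in the previous proof, namely slicing $B_{n,m}$ according to $k = |x \cap (A_m \setminus A_{m-1})|$, depends only on the adjunctive hierarchy and on Corollary~\ref{corollary-rka}; neither of these is affected by imposing the extra constraint $\rk(x) \leq t$. So I would carry out the partition argument for $R^t_{n,m}$ with exactly the same index $k$, and then separately translate the rank bound onto the two ``ingredients'' that build $x$: the adjoined elements of $A_m \setminus A_{m-1}$, and the base set in a lower layer.

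The translation rests on the elementary fact that $\rk(x) \leq t$ iff every $y \in x$ satisfies $\rk(y) \leq t-1$. Using this, I would break $R^t_{n,m}$ into pieces $R^t_{n,m,k} = \{x \in R^t_{n,m} : |x \cap (A_m \setminus A_{m-1})| = k\}$ for $0 \leq k \leq n-m$. The case $k=0$ gives sets $x \subseteq A_{m-1}$ in $A_n \setminus A_{n-1}$ with $\rk(x) \leq t$, contributing the term $r^t_{n,m-1}$. For $1 \leq k \leq n-m-1$, the $k$ elements taken from $A_m \setminus A_{m-1}$ must each have rank $\leq t-1$, so they are chosen from $R^{t-1}_{m,m-1}$ in $\binom{r^{t-1}_{m,m-1}}{k}$ ways; the remaining ``base'' set $y \in B_{n-k,m-1}$ must in addition satisfy $\rk(y) \leq t$ (since its elements become elements of $x$), which puts $y$ in $R^t_{n-k,m-1}$, giving $r^t_{n-k,m-1}$ choices. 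The boundary case $k = n-m$ is analogous, except that, as in the original proof, any $y \in A_m$ works (Corollary~\ref{corollary-rka} forces $\rka(x) = n$ automatically), subject only to $\rk(y) \leq t$; the number of such $y$ is $\sum_{k=0}^m r^t_{k,k-1}$, yielding the last term.

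The main thing to double-check is that the argument justifying why the partition terminates at $k = n-m$ (the contradiction using Corollary~\ref{corollary-rka} and Remark~\ref{remark-rka1} in the original proof) is purely about adjunctive rank and carries over unmodified, so no extra work is needed there. Base cases ($r^t_{0,-1} = 1$ because $\rk(\emptyset) = 0 \leq t$, and $r^t_{n,-1} = 0$ for $n \geq 1$) match those of Theorem~\ref{thm-main} for the same reason. I do not anticipate any genuine obstacle: the rank refinement factors cleanly through the partition, and the only novelty is the bookkeeping of which of the two appearances of $b_{\cdot,\cdot}$ should become $r^t_{\cdot,\cdot}$ and which should become $r^{t-1}_{\cdot,\cdot}$, dictated uniformly by whether the corresponding sets play the role of an element of $x$ (rank $\leq t-1$) or of the base set $y$ (rank $\leq t$).
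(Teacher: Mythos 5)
Your proposal is correct and takes essentially the same route as the paper's own proof, which likewise partitions $R^t_{n,m}$ into the pieces $R^t_{n,m,k}$ by $k = |x \cap (A_m \setminus A_{m-1})|$ and distributes the rank constraint exactly as you do: the $k$ adjoined elements must lie in $R^{t-1}_{m,m-1}$ (rank at most $t-1$, being elements of $x$) while the base set lies in $R^t_{n-k,m-1}$, or, when $k = n-m$, is any $y \in A_m$ of rank at most $t$ counted by $\sum_{k=0}^m r^t_{k,k-1}$ via Corollary~\ref{corollary-rka}. Your observation that the bound $k \leq n-m$ and the adjunctive-rank bookkeeping carry over unmodified from Theorem~\ref{thm-main} is precisely how the paper argues it.
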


\begin{proof}
The proof follows step by step the one of Theorem~\ref{thm-main}. Since $R^t_{n,m} \subseteq B_{n,m}$, we can partition the set $R_{n,m}^t$ into sets $R_{n,m,k}^t$, for $k \in \{0,\dots,n-m\}$:
\[
R_{n,m,k}^t := \left\lbrace x \in R_{n,m}^t : ~ |x \cap (A_m \setminus A_{m-1})| = k \right\rbrace.
\]
Again, the first term of the recurrence comes from the fact that $R_{n,m,0}^t = R_{n,m-1}^t$.

For $k \in \{1,\dots,n-m-1\}$, every set $x \in R^t_{n,m,k}$ is obtained by adjoining $k$ sets in $A_m \setminus A_{m-1}$ with rank at most $t-1$ (hence in $R^{t-1}_{m,m-1}$), to a set $y \in A_{n-k}$ such that $y \subseteq A_{m-1}$ with rank at most $t$. As before, $y \notin A_{n-k-1}$ hence such sets $y$ are precisely those in $R^t_{n-k,m-1}$. This gives the next term in the recurrence relation.

Similarly, if $k = n-m$ every set $x \in R^t_{n,m,n-m}$ is obtained by adjoining $n-m$ sets in $A_m \setminus A_{m-1}$ with rank at most $t-1$ (hence in $R^{t-1}_{m,m-1}$), to any set $y \in A_m$ with rank at most $t$ (by Corollary~\ref{corollary-rka}). This gives the last term in the recurrence relation.
\end{proof}

\subsection{Refining by cardinality}
\label{sec:cardinality}

\begin{table*}[t]
\caption{Small values of $d^t_n$.\label{table-cardinality}}
\begin{center}
\footnotesize
\setlength{\tabcolsep}{.16cm}
\begin{tabular}{|l|lllllll|}
\hline
$n$ & $d^0_n$ & $d^1_n$ & $d^2_n$ & $d^3_n$ & $d^4_n$ & $d^5_n$ & $d^6_n$ \\
\hline
0 &	1 & & & & & & \\
1 &	1 &	1 & & & & & \\
2 &	1 &	2 &	1 & & & & \\
3 &	1 &	4 &	5 &	2 & & & \\
4 &	1 &	12 &	38 &	44 &	17 & & \\
5 &	1 &	112 &	1266 &	3964 &	4573 &	1764 & \\
6 &	1 &	11680 &	1301832 &	14711308 &	46060477 &	53135964 &	20496642\\
\hline
\end{tabular}
\end{center}
\end{table*}%

In this section we extend Theorem~\ref{thm-main} by imposing a restriction on the cardinality of the sets in some $A_n$. Observe that for any $x \in A_n$, $|x| \leq n$ holds.

%

\begin{definition}
For every $n > m \geq 0$, and every $0 \leq t \leq n$, let $D_{n,m}^{t} := \{x \in A_{n} \setminus A_{n-1} :~ x \subseteq A_{m} \wedge |x| \leq t\}$, and let $d_{n,m}^{t} := |D_{n,m}^{t}|$.
\end{definition}

As in the previous section, we observe that $d_{n}^t := |\{x \in A_{n} :~ |x| = t\}|$ equals $\sum_{m=0}^n (d^t_{m,m-1} - d^{t-1}_{m,m-1})$, and the values $d_{n,m}^t$ satisfy a recurrence relation similar to the one in Theorem~\ref{thm-main}. Numerical values of $d_n^t$ are shown in Table~\ref{table-cardinality}.

\begin{theorem}
\label{thm-cardinality}
For all natural numbers $n > m \geq 0$, and $0 \leq t \leq n$, the following recurrence relation holds
\begin{align*}
d^t_{n,m} &= d^t_{n,m-1} + \sum_{k=1}^{n-m-1} \left( d^{t-k}_{n-k,m-1} \binom{b_{m,m-1}}{k} \right) + \binom{b_{m,m-1}}{n-m} \sum_{k=0}^m d^{t-n+m}_{k,k-1},
\end{align*}
where $d^t_{0,-1} = 1$, and $d^t_{n,-1} = 0$, for all $n \geq 1$ and $t \geq 0$.
\end{theorem}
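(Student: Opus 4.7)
My plan is to mimic the proof of Theorem~\ref{thm-main} almost verbatim, threading through cardinality bookkeeping in the spirit of Theorem~\ref{thm-rank}. Since $D^t_{n,m} \subseteq B_{n,m}$, I partition $D^t_{n,m}$ via the same statistic used there, namely $k := |x \cap (A_m \setminus A_{m-1})|$, into pieces $D^t_{n,m,k}$ for $k \in \{0, \dots, n-m\}$ (the bound $k \leq n-m$ is inherited directly from the argument of Theorem~\ref{thm-main}, since the cardinality constraint only shrinks the set). The $k = 0$ piece is again $D^t_{n,m-1}$, producing the first term.

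The key cardinality identity I will exploit is the following: whenever $x = y \cup S$ with $S \subseteq A_m \setminus A_{m-1}$ and $y \subseteq A_{m-1}$, the sets $y$ and $S$ are automatically disjoint, so $|x| = |y| + |S|$. For $k \in \{1, \dots, n-m-1\}$, the proof of Theorem~\ref{thm-main} decomposes each $x \in B_{n,m,k}$ uniquely as such a union with $|S| = k$ and $y \in B_{n-k, m-1}$; the constraint $|x| \leq t$ therefore translates to $|y| \leq t-k$, i.e.\ $y \in D^{t-k}_{n-k,m-1}$, yielding the middle sum. For $k = n-m$ the analogous decomposition allows $y$ to range over all of $A_m$, but since $y \in A_m$ still implies $y \subseteq A_{m-1}$ by Remark~\ref{remark-kirby}, the same disjointness holds and $|x| = |y| + (n-m)$; the constraint becomes $|y| \leq t-n+m$. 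Partitioning the admissible $y$'s by their adjunctive rank $k \in \{0, \dots, m\}$ then gives $\sum_{k=0}^{m} d^{t-n+m}_{k,k-1}$, the final term.

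The main potential pitfall is purely bookkeeping: shifted superscripts $t-k$ and $t-n+m$ can drop below zero, so I adopt the natural convention $d^t_{n,m} = 0$ whenever $t < 0$, which makes insufficient-budget terms vanish automatically. Beyond the additive identity $|x| = |y| + k$, no new combinatorial idea is needed; in particular, the adjunctive-rank calculations that guarantee $\rka(x) = n$ (the iterated application of Corollary~\ref{corollary-rka}, and the argument ruling out $k > n-m$) carry over unchanged from the proof of Theorem~\ref{thm-main}, so nothing about the set-theoretic side of the decomposition has to be redone.
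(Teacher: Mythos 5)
Your proposal is correct and follows essentially the same route as the paper, which likewise partitions $D^t_{n,m}$ by $k = |x \cap (A_m \setminus A_{m-1})|$ and reuses the set-theoretic decomposition from Theorem~\ref{thm-main}, shifting only the cardinality budget to $y$ (via $y \in D^{t-k}_{n-k,m-1}$, resp.\ $|y| \leq t-n+m$). Your explicit disjointness observation ($y \subseteq A_{m-1}$, $S \subseteq A_m \setminus A_{m-1}$, hence $|x| = |y| + k$) and the convention $d^t_{n,m} = 0$ for $t < 0$ are exactly the bookkeeping the paper leaves implicit.
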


\begin{proof}
The proof follows step by step the one of Theorem~\ref{thm-main}. Since $D^t_{n,m} \subseteq B_{n,m}$, we can partition the set $D_{n,m}^t$ into sets $D_{n,m,k}^t$, for $k \in \{0,\dots,n-m\}$:
\[
D_{n,m,k}^t := \left\lbrace x \in D_{n,m}^t : ~ |x \cap (A_m \setminus A_{m-1})| = k \right\rbrace.
\]
Again, the first term of the recurrence comes from the fact that $D_{n,m,0}^t = D_{n,m-1}^t$.

For $k \in \{1,\dots,n-m-1\}$, every set $x \in D^t_{n,m,k}$ is obtained by adjoining any $k$ sets in $A_m \setminus A_{m-1}$ (hence in $B_{m,m-1}$), to a set $y \in A_{n-k}$ such that $y \subseteq A_{m-1}$ with cardinality at most $t-k$. As before, $y \notin A_{n-k-1}$ hence such sets $y$ are precisely those in $D^{t-k}_{n-k,m-1}$. This gives the next term in the recurrence relation.

Similarly, if $k = n-m$ every set $x \in D^t_{n,m,n-m}$ is obtained by adjoining any $n-m$ sets in $A_m \setminus A_{m-1}$ (hence in $B_{m,m-1}$), to any set $y \in A_m$ with cardinality at most $t-(n-m)$ (by Corollary~\ref{corollary-rka}). This gives the last term in the recurrence relation.
\end{proof}

\subsection{Hereditarily finite sets with atoms}
\label{sec:urelements}

Pure hereditarily finite sets can be generalized in a straightforward manner by allowing the presence of a  set $\mathcal{U}$ of \emph{urelements}, or \emph{atoms}, pairwise different and also different from sets (and in particular from $\emptyset$). The cumulative hierarchy of hereditarily finite sets with atoms $\mathcal{U}$ is defined as (see also \cite{DBLP:journals/ipl/PolicritiT11,Chiaruttini-Omodeo-08}):
\[V_0^{\mathcal{U}} := \mathcal{U}, \ \ V_{n+1}^{\mathcal{U}} := \mathcal{U} \cup \mathcal{P}(V_n^\mathcal{U}), \ \ V_\omega^{\mathcal{U}} := \bigcup_{n \in \omega} V_n^{\mathcal{U}}.\] 

The adjunctive hierarchy of hereditarily finite sets with urelements $\mathcal{U}$ can be analogously defined as:
\[A_0^{\mathcal{U}} := \{\emptyset\} \cup \mathcal{U},\]
\[A_{n+1}^\mathcal{U} := \{\emptyset\} \cup \mathcal{U} \cup \big\{ x \cup \{y\} \;|\; x \in A_n^\mathcal{U} \setminus \mathcal{U} ,y \in A_n^\mathcal{U} \big\},\]
so that $V_\omega^{\mathcal{U}} = \bigcup_{n \in \omega} A_n^{\mathcal{U}}$.

It is easy to see that, for a finite $\mathcal{U}$, all recurrences proposed in this paper generalize to $A_n^\mathcal{U}$, the only differences being in the initialization of the recurrences. For example, to obtain an analog of Theorem~\ref{thm-main}, denote by $B_{n,m}^{\mathcal{U}}$ the set $\left\lbrace x \in A_n^\mathcal{U} \setminus A_{n-1}^\mathcal{U} : ~ x \subseteq A_m^\mathcal{U} \right\rbrace$, and by $b_{n,m}^u$ the cardinality $|B_{n,m}^{\mathcal{U}}|$, where $u = |\mathcal{U}|$, so that 
\[|A_n^\mathcal{U}| = \sum_{k=0}^n b_{k,k-1}^u.\]

\begin{theorem}
For all natural numbers $n > m \geq 1$, the following recurrence relation holds
\begin{align*}
b_{n,m}^u &= b_{n,m-1}^u + \sum_{k=1}^{n-m-1} \left( b_{n-k,m-1}^u \binom{b_{m,m-1}^u}{k} \right) + \binom{b_{m,m-1}^u}{n-m} \left( 1+\sum_{k=1}^m b_{k,k-1}^u \right),
\end{align*}
where $b^u_{0,-1} = u+1$, and $b^u_{n,0} = \binom{u+1}{n}$ for all $n \geq 1$.
\end{theorem}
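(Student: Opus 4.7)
The plan is to mirror the proof of Theorem~\ref{thm-main} step by step, checking at each stage that the ingredients survive the introduction of a finite collection $\mathcal{U}$ of atoms. I would first verify that the rank machinery transfers intact: every atom lies in $A_0^{\mathcal{U}}$ and hence has adjunctive rank $0$, and the adjunction operation $x \cup \{y\}$ in the definition of $A_{n+1}^{\mathcal{U}}$ admits only non-atom $x$ while permitting arbitrary $y$, so atoms appear solely as adjoined elements. With these observations, Remark~\ref{remark-basic}, Remarks~\ref{remark-rka0} and~\ref{remark-rka1}, Lemma~\ref{lemma-rka}, and hence Corollary~\ref{corollary-rka} all continue to hold in the atoms setting, since the inductive construction underlying Lemma~\ref{lemma-rka} never needs to treat an atom as a base set.

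With the rank tools in place, I would partition
\[B_{n,m}^{\mathcal{U}} = \bigsqcup_{k=0}^{n-m} B_{n,m,k}^{\mathcal{U}}, \quad B_{n,m,k}^{\mathcal{U}} := \{x \in B_{n,m}^{\mathcal{U}} : |x \cap (A_m^{\mathcal{U}} \setminus A_{m-1}^{\mathcal{U}})| = k\},\]
and rule out $k \geq n-m+1$ by the same iterated application of Corollary~\ref{corollary-rka} as in the pure case. The piece $k=0$ trivially contributes $b_{n,m-1}^u$. For $1 \leq k \leq n-m-1$, each set in $B_{n,m,k}^{\mathcal{U}}$ factors as $y \cup \{z_1, \ldots, z_k\}$ with $z_i$ distinct elements of $A_m^{\mathcal{U}} \setminus A_{m-1}^{\mathcal{U}}$ (a choice of $\binom{b_{m,m-1}^u}{k}$, and since $m \geq 1$ this slice contains only genuine sets, no atoms) and $y \in B_{n-k,m-1}^{\mathcal{U}}$, giving the middle sum.

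The only substantively new calculation is in the $k = n-m$ term. Here, by Corollary~\ref{corollary-rka}, any set $y \in A_m^{\mathcal{U}}$ produces an element of $A_n^{\mathcal{U}} \setminus A_{n-1}^{\mathcal{U}}$ after $n-m$ adjunctions of elements from $A_m^{\mathcal{U}} \setminus A_{m-1}^{\mathcal{U}}$, but the adjunction operator forbids $y$ from being an atom, so $y \in A_m^{\mathcal{U}} \setminus \mathcal{U}$. Using $|A_m^{\mathcal{U}}| = b_{0,-1}^u + \sum_{k=1}^m b_{k,k-1}^u = (u+1) + \sum_{k=1}^m b_{k,k-1}^u$ yields $|A_m^{\mathcal{U}} \setminus \mathcal{U}| = 1 + \sum_{k=1}^m b_{k,k-1}^u$, which is precisely the factor appearing in the recurrence (and explains why the sum in the last term is taken from $k=1$ rather than from $k=0$ as in Theorem~\ref{thm-main}).

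The main obstacle, such as it is, is careful bookkeeping: one must track that atoms enter $b_{0,-1}^u$ but never any $b_{m,m-1}^u$ with $m \geq 1$, and that atoms may freely appear as adjoined elements but never as base sets. The boundary values $b_{0,-1}^u = u+1$ and $b_{n,0}^u = \binom{u+1}{n}$ for $n \geq 1$ close the recursion: the first by direct inspection of $A_0^{\mathcal{U}}$, and the second because Lemma~\ref{lemma-rka} gives $\rka(x) = |x|$ for any $x \subseteq A_0^{\mathcal{U}}$, so $B_{n,0}^{\mathcal{U}}$ consists exactly of the $n$-element subsets of the $(u+1)$-element set $A_0^{\mathcal{U}}$.
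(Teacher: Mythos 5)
Your proof is correct and takes essentially the same approach as the paper: the paper's own argument is a one-liner observing that the proof of Theorem~\ref{thm-main} carries over verbatim, the only change being that an atom cannot serve as the base of an adjunction, which replaces $\sum_{k=0}^m b_{k,k-1}$ by $1+\sum_{k=1}^m b^u_{k,k-1}$ in the last term---exactly the point you isolate. Your additional checks (that the rank machinery of Remarks~\ref{remark-rka0}--\ref{remark-rka1} and Lemma~\ref{lemma-rka} transfers, and the derivation of the boundary values $b^u_{0,-1}=u+1$ and $b^u_{n,0}=\binom{u+1}{n}$ via $\rka(x)=|x|$ for $x \subseteq A_0^{\mathcal{U}}$) are correct and in fact more detailed than the paper's proof.
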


\begin{proof}
The proof follows closely the one of Theorem~\ref{thm-main}. The only difference is in the fact that the only set in $A_0$ allowed as first term in an adjunction is the empty set, thus giving $1+\sum_{k=1}^m b_{k,k-1}^u$ instead of $\sum_{k=0}^m b_{k,k-1}^u$.
\end{proof}

Numeric values of $|A_n^\mathcal{U}|$ are shown in Table~\ref{table-urelements}.

\begin{table*}[t]
\caption{Small values of $|A_n^\mathcal{U}|$ for small cardinalities of $\mathcal{U}$.\label{table-urelements}}
\small
\begin{center}
\begin{tabular}{|l|lllll|}
\hline
$n$ & $|\mathcal{U}| = 1$ & $|\mathcal{U}| = 2$ & $|\mathcal{U}| = 3$ & $|\mathcal{U}| = 4$ & $|\mathcal{U}| = 5$ \\
\hline
0 &	2 &	3 &	4 &	5 &	6 \\
1 &	4 &	6 &	8 &	10 &	12 \\
2 &	11 &	21 &	34 &	50 &	69 \\
3 &	86 &	328 &	898 &	2010 &	3932 \\
4 &	6707 &	102751 &	785834 &	3974665 &	15288832 \\
5 &	44661920 &	10540006012 &	617171670159 &	15793892739676 &	233717946472981 \\
\hline
\end{tabular}
\end{center}
\end{table*}

\section{Bounded adjunctive hierarchies}
\label{sec:bounded}

As Theorem~\ref{thm:asymp} shows, the growth of $a_n$, albeit slower than in the von Neumann's case, is still very fast, the $a_n$'s having an exponential number (to be more precise, $\Theta(2^n)$) of bits.

This observation motivates us to introduce bounded analogues of the adjunctive hierarchy in order to obtain slower asymptotic growth, ideally one in which the cardinalities of the different layers have a linear number of bits.

\begin{definition}
Given an unbounded sublinear function $f: \N \rightarrow \N$ (that is, for all $n$, $f(n) \leq n$, and there exists an $m$ such that $f(m)>n$), the adjunctive hierarchy bounded by $f$ is
\[
A^f_0 := \{\emptyset\}, \ \ A^f_{n+1} := \{\emptyset\} \cup \big\{ x \cup \{y\} \;|\; x \in A_n, y \in A_{f(n)} \big\}.
\]
We let $a^f_n = |A^f_n|$.
\end{definition}

Kirby's adjunctive hierarchy is obtained for $f$ the identity function. As in that case, it also holds that for any unbounded sublinear function $f$, $A^f_n \subseteq A^f_{n+1}$ for all $n \in \omega$, and $\bigcup_{n \in \omega}A^f_n = V_\omega$; that is, $A^f_n$ is a hierarchy of hereditarily finite sets. By changing the asymptotic growth of $f(n)$, it is possible to fine-tune the asymptotic behavior of $a^f_n$. In order to obtain numeric values for the $a^f_n$, it is possible to prove an analogue of the relation in Theorem~\ref{thm-main}.

\begin{definition} \label{def-inverse}
For every $n > m \geq 0$, let $B^f_{n,m}$ be the set $\left\lbrace x \in A^f_n \setminus A^f_{n-1} : ~ x \subseteq A^f_m \right\rbrace$, $b^f_{n,m} := |B^f_{n,m}|$ and $g: \N \rightarrow \N$ be such that $g(n) = \min\{t :~ f(t) \geq n \}$. We call this last function the \emph{inverse} of $f$.
\end{definition}

\begin{theorem}
\label{thm-bounded}
For all natural numbers $n > m \geq 0$, the following recurrence relation holds
\begin{align*}
b^f_{n,m} &= b^f_{n,m-1} + \sum_{k=1}^{n-g(m)-1} \left( b^f_{n-k,m-1} \binom{b^f_{m,m-1}}{k} \right) + \binom{b^f_{m,m-1}}{n-g(m)} \sum_{k=0}^{g(m)} b^f_{k,k-1},
\end{align*}
where $b^f_{0,-1} = 1$, and $b^f_{n,-1} = 0$, for all $n > 1$.
\end{theorem}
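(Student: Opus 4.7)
The plan is to mirror the argument of Theorem~\ref{thm-main}, adapting it to the bounded setting. I partition $B^f_{n,m}$ into
\[
B^f_{n,m,k} := \{x \in B^f_{n,m} : |x \cap (A^f_m \setminus A^f_{m-1})| = k\},
\]
and count each piece. The substantive change compared to the unbounded case is that $k$ now ranges over $\{0, 1, \ldots, n - g(m)\}$ rather than $\{0, 1, \ldots, n - m\}$, and the count in the extremal case $k = n - g(m)$ reflects the coarser restriction imposed by $f$ on which $y$'s can be adjoined.

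Two auxiliary facts drive the adaptation. First, a short induction on $n$ (using that $f$ is nondecreasing, which is implicit in the definition of $g$) shows that every element of a set $y \in A^f_n$ lies in $A^f_{f(n-1)}$; in particular, every $y \in A^f_{g(m)}$ satisfies $y \subseteq A^f_{m-1}$, since $f(g(m)-1) < m$ by the minimality in Definition~\ref{def-inverse}. This plays the role of Remark~\ref{remark-kirby} in the new setting. Second, defining the bounded adjunctive rank $\rka_f(x) := \min\{s : x \in A^f_s\}$ and following the case analysis of Lemma~\ref{lemma-rka}, I would establish the formula
\[
\rka_f(\{x_1, \ldots, x_n\}) = \max_{1 \le j \le n}\{g(\rka_f(x_j)) + n - j\} + 1
\]
for $x_j$ sorted by nondecreasing rank; the only change from Lemma~\ref{lemma-rka} is that each $\rka(x_j)$ is replaced by $g(\rka_f(x_j))$, reflecting the level one must first reach before an element of rank $\rka_f(x_j)$ can be adjoined. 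From this formula one extracts the bounded analogue of Corollary~\ref{corollary-rka}: if $x$ contains $\ell \geq 1$ elements of rank exactly $m$, then $\rka_f(x) \ge g(m) + \ell$, so in particular $\ell \le n - g(m)$ whenever $\rka_f(x) = n$.

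With these tools, the counts of $B^f_{n,m,k}$ transcribe directly from the proof of Theorem~\ref{thm-main}. The case $k = 0$ gives $B^f_{n,m,0} = B^f_{n,m-1}$, producing the first term. For $1 \le k \le n-g(m)-1$, the canonical decomposition $x = y \cup z$ with $y = x \cap A^f_{m-1}$ and $z = x \cap (A^f_m \setminus A^f_{m-1})$ forces $\rka_f(y) = n - k > g(m)$ by the rank formula, hence $y \in B^f_{n-k, m-1}$ and $z$ is any $k$-subset of the $b^f_{m,m-1}$ elements of $A^f_m \setminus A^f_{m-1}$, yielding the middle sum. For $k = n - g(m)$, the rank formula instead permits $\rka_f(y)$ to take any value at most $g(m)$, so $y$ ranges freely over $A^f_{g(m)}$ (with the containment $y \subseteq A^f_{m-1}$ automatic by the first auxiliary fact), giving the factor $\sum_{k=0}^{g(m)} b^f_{k, k-1}$.

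The main technical obstacle will be setting up the rank evolution carefully. In the unbounded case Lemma~\ref{lemma-rka} does all the work in one shot; in the bounded case, one must redo the induction of that lemma with $g$ inserted at every step, verifying that the nondecreasing sort of the $\rka_f(x_j)$ still produces a nondecreasing sort of the $g(\rka_f(x_j))$ and that the case split between $\rka_f(y \setminus \{x_n\}) \le \rka_f(x_n)$ and the opposite inequality still cooperates with $g$. Once the rank formula is in place, both the bound on $k$ and the exact rank of $y \cup z$ fall out directly, and the rest is bookkeeping identical to the one in Theorem~\ref{thm-main}.
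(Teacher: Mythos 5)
Your proposal is correct and takes essentially the same route the paper intends: the paper's ``proof'' of Theorem~\ref{thm-bounded} is a one-line deferral to Theorem~\ref{thm-main} together with the fact that adjoining $k$ elements outside $A^f_{m-1}$ to any set yields a set outside $A^f_{g(m)+k-1}$, and this is precisely the consequence you extract from your bounded rank formula $\rka_f(\{x_1,\ldots,x_n\}) = \max_{1\le j\le n}\{g(\rka_f(x_j)) + n - j\} + 1$, after which your three counts (including the observation that $y \subseteq A^f_{m-1}$ is automatic for $y \in A^f_{g(m)}$) match the intended transcription of the unbounded argument. One quibble: monotonicity of $f$ is not ``implicit in the definition of $g$'' --- it is an unstated standing hypothesis of the whole section (for non-monotone $f$ even $A^f_n \subseteq A^f_{n+1}$ can fail), so you are right to invoke it, but it should be flagged as an assumption rather than derived from Definition~\ref{def-inverse}.
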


The proof follows step by step the one of Theorem~\ref{thm-main}, using the fact that adjoining $k$ elements not in $A^f_{m-1}$ to any set gives a set that is not in $A^f_{g(m)+k-1}$, and is left to the reader.

In Tables~\ref{table-2bound} and \ref{table-sqrtbound} in the Appendix we present numeric values of $a^f_n$ for some specific choices of $f$. When $f$ is not the identity function it can happen that $A^f_{n+1} = A^f_n$, in this case we feel free to skip the corresponding entries in the tables. These empirical results suggest that for $f(n) = \lfloor \log_2(n+1) \rfloor$ and possibly also for $f(n) = \lfloor \sqrt n \rfloor$, the number of digits needed to represent $a_n^f$ is less than $n$. 

As suggested from Table \ref{table-sqrtbound}, for functions $f$ growing sufficiently slowly and smoothly in some sense, the corresponding sequence $A^f_n$ is essentially the same (ignoring repetitions). In such cases we obtain a hierarchy equivalent to the following.

\begin{definition}
The \emph{minimally bounded} adjunctive hierarchy is
\begin{align*}
\bar{A}_0 &:= \{\emptyset\}, \\
\bar{A}_{n+1} &:= \{\emptyset\} \cup \big\{ x \cup \{y\} \;|\; x \in \bar{A}_n \wedge \exists m \left( \p(\bar{A}_m) \subseteq \bar{A}_n \wedge y \in \bar{A}_{m+1} \right)\!\!\big\}.
\end{align*}
We let $\bar{a}_n := |\bar{A}_n|$.
\end{definition}

In this sequence, the adjunction of a set $y \in \bar{A}_{m+1}$ is allowed only if all the sets obtainable \emph{\`a la} von Neumann from $\bar{A}_{m}$ are already in the hierarchy, that is, $\p(\bar{A}_{m}) \subseteq \bar{A}_n$, so that the adjunction of any set in $\bar{A}_{m}$ to sets in $\bar{A}_n$ would not produce any new set.

Observe that also $\bar{A}_{n-1} \subseteq \bar{A}_n$ holds for all $n \in \omega$.

\begin{lemma}
For all $n$, $\bar{A}_{\bar{a}_n} = \p(\bar{A}_n)$.
\end{lemma}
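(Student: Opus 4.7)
I proceed by induction on $n$. The base case $n = 0$ is verified directly: $\bar{A}_0 = \{\emptyset\}$, $\bar{a}_0 = 1$, and with the convention $\bar{A}_{-1} := \emptyset$ (so that $\p(\bar{A}_{-1}) = \{\emptyset\} \subseteq \bar{A}_0$) the only available adjunction produces $\bar{A}_1 = \{\emptyset, \{\emptyset\}\} = \p(\bar{A}_0)$.

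For the inductive step, I assume $\bar{A}_{\bar{a}_{n-1}} = \p(\bar{A}_{n-1})$ and derive $\bar{A}_{\bar{a}_n} = \p(\bar{A}_n)$. The key idea is to interpolate between these statements via a secondary induction on $\ell \in \{0, 1, \ldots, \bar{a}_n - \bar{a}_{n-1}\}$, establishing
$$\bar{A}_{\bar{a}_{n-1} + \ell} = \{\, s \subseteq \bar{A}_n : |s \setminus \bar{A}_{n-1}| \leq \ell \,\}.$$
The case $\ell = 0$ is exactly the outer inductive hypothesis (a subset of $\bar{A}_n$ disjoint from $\bar{A}_n \setminus \bar{A}_{n-1}$ is simply a subset of $\bar{A}_{n-1}$), and the case $\ell = |\bar{A}_n \setminus \bar{A}_{n-1}| = \bar{a}_n - \bar{a}_{n-1}$ yields precisely the desired identity $\bar{A}_{\bar{a}_n} = \p(\bar{A}_n)$.

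To execute the secondary step, fix $j = \bar{a}_{n-1} + \ell$ with $\ell < \bar{a}_n - \bar{a}_{n-1}$ and examine which $y$'s are admissible at stage $j + 1$. Since $\p(\bar{A}_{n-1}) \subseteq \bar{A}_j$ by monotonicity and the outer induction, the choice $m = n - 1$ is legal and any $y \in \bar{A}_n$ may be adjoined. On the other hand, using $y \in \bar{A}_{n+1} \setminus \bar{A}_n$ would force $\p(\bar{A}_n) \subseteq \bar{A}_j$, which is ruled out by the counting estimate
$$|\bar{A}_j| = 2^{\bar{a}_{n-1}} \sum_{k = 0}^{\ell} \binom{\bar{a}_n - \bar{a}_{n-1}}{k} < 2^{\bar{a}_n} = |\p(\bar{A}_n)|,$$
valid for $\ell < \bar{a}_n - \bar{a}_{n-1}$. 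Hence $\bar{A}_{j+1} = \{\emptyset\} \cup \{\, x \cup \{y\} : x \in \bar{A}_j,\ y \in \bar{A}_n \,\}$, and the identity for $\ell + 1$ follows: the inclusion $\subseteq$ holds because each adjunction increases $|s \setminus \bar{A}_{n-1}|$ by at most one, while $\supseteq$ follows because any target $s$ with $|s \setminus \bar{A}_{n-1}| = \ell + 1$ can be realized by choosing some $y \in s \setminus \bar{A}_{n-1}$ and setting $x = s \setminus \{y\} \in \bar{A}_j$.

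The main obstacle is pinning down the admissible $y$'s at each intermediate stage $j$ with $\bar{a}_{n-1} \leq j < \bar{a}_n$: one must ensure that the construction never escapes $\p(\bar{A}_n)$ prematurely. The counting inequality above is what makes this precise, and it is the crucial ingredient that locates the exact index at which $\p(\bar{A}_n)$ first appears.
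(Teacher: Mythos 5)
Your proof is correct and follows essentially the same route as the paper's: the same outer induction on $n$ combined with a secondary induction establishing $\bar{A}_{\bar{a}_{n-1}+\ell} = \{\, s \subseteq \bar{A}_n : |s \setminus \bar{A}_{n-1}| \leq \ell \,\}$ for the intermediate levels. The only (cosmetic) difference is that you rule out $\p(\bar{A}_n) \subseteq \bar{A}_j$ by an explicit cardinality count, where the paper simply observes $\bar{A}_j \subsetneq \p(\bar{A}_n)$ directly from the secondary inductive hypothesis; your convention $\bar{A}_{-1} := \emptyset$ in the base case is a sensible way to make the definition's first step rigorous.
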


\begin{proof}
We prove the thesis by induction on $n$. For $n=0$, since $\bar{A}_0 = \{\emptyset\}$ then $\bar{a}_0 = 1$, and $\bar{A}_1 = \p(\bar{A}_0) = \left\{\emptyset, \{\emptyset\}\right\}$ is easily checked to be true.

Now suppose that $\bar{A}_{\bar{a}_{n-1}} = \p(\bar{A}_{n-1})$. We prove by another induction that for all $m$ such that $\bar{a}_{n-1} \leq m \leq \bar{a}_n$, the following holds:
\[
\bar{A}_m = \left\lbrace x \subseteq \bar{A}_n : ~ \left| x \setminus \bar{A}_{n-1} \right| \leq (m - \bar{a}_{n-1}) \right\rbrace .
\]
We will obtain the claim for $m = \bar{a}_{n}$, since plainly 
\[
\p(\bar{A}_n) = \left\lbrace x \subseteq \bar{A}_n : ~ \left| x \setminus \bar{A}_{n-1} \right| \leq (\bar{a}_n - \bar{a}_{n-1}) \right\rbrace.
\]
	For $m = \bar{a}_{n-1}$ it is true by the initial inductive hypothesis, since
\begin{align*}
\bar{A}_{\bar{a}_{n-1}} &= \p(\bar{A}_{n-1})\\
&= \left\lbrace x \subseteq \bar{A}_n : ~ x \subseteq \bar{A}_{n-1} \right\rbrace \\
&= \left\lbrace x \subseteq \bar{A}_n : ~ \left| x \setminus \bar{A}_{n-1} \right| \leq 0 \right\rbrace.
\end{align*}

Let $m > \bar{a}_{n-1}$ and assume that the hypothesis holds for $\bar{A}_{m-1}$. Since $m \leq \bar{a}_n$, this implies that $\bar{A}_{m-1} \subsetneq \p(\bar{A}_n)$. Therefore, by the definition and the fact that $\bar{A}_{\bar{a}_{n-1}} = \p(\bar{A}_{n-1})$ holds, we have that
\[
\bar{A}_{m} := \{\emptyset\} \cup \big\{ x \cup \{y\} \;|\; x \in \bar{A}_{m-1} ~ \wedge ~y \in \bar{A}_{n-1} \big\}.
\]
This proves the claim, since the inductive hypothesis holds for $\bar{A}_{m-1}$.
\end{proof}

The last result shows that for infinitely many $n$ (i.e. if there exists $m$ such that $\bar{a}_m = n$), $\bar{a}_n = 2^n$ hence the asymptotic growth of $\bar{a}_n$ is $\Theta(2^n)$. It also follows that $\bar{A}_{|V_n|} = V_{n+1}$, thus the sequence $\bar{a}_n$ contains the usual von Neumann's hierarchy as a subsequence.

\begin{corollary} \label{corollary-bounded}
The sequence $\bar{A}_n$ is obtained as $A^{\bar{f}}_n$ where the function $\bar{f}$ is defined as
\[
\bar{f}(n) = \min \left\lbrace m: ~ \bar{a}_m > n \right\rbrace .
\]
Thus, the corresponding inverse $\bar{g}$ is such that $\bar{g}(n) = \bar{a}_{n-1}$, $\bar{g}(0) = 0$.
\end{corollary}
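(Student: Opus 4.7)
The plan is to reduce the existential quantifier ``$\exists m\,(\p(\bar{A}_m) \subseteq \bar{A}_n \wedge y \in \bar{A}_{m+1})$'' appearing in the recursive definition of $\bar{A}_{n+1}$ to a single membership condition of the form ``$y \in \bar{A}_{\bar{f}(n)}$'', so that the defining recurrences of $\bar{A}$ and $A^{\bar{f}}$ coincide term-by-term. The principal tool is the preceding lemma, $\bar{A}_{\bar{a}_m} = \p(\bar{A}_m)$.

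First, I would combine that lemma with monotonicity $\bar{A}_{k-1} \subseteq \bar{A}_k$ to deduce that $\p(\bar{A}_m) \subseteq \bar{A}_n$ is equivalent to $\bar{a}_m \leq n$. Strict monotonicity of $\bar{a}$ drops out of the same lemma, since $\bar{A}_{\bar{a}_n} = \p(\bar{A}_n) \supsetneq \p(\bar{A}_{n-1}) = \bar{A}_{\bar{a}_{n-1}}$, so the set $\{m : \bar{a}_m \leq n\}$ is a downward-closed interval $\{0,1,\dots,M\}$, and the existential collapses to the condition $y \in \bar{A}_{M+1}$. By the definition $\bar{f}(n) = \min\{m : \bar{a}_m > n\}$, we get $\bar{f}(n) = M+1$, so the existential condition becomes literally $y \in \bar{A}_{\bar{f}(n)}$. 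A straightforward induction on $n$ (the base case $\bar{A}_0 = \{\emptyset\} = A^{\bar{f}}_0$ being trivial) then delivers $\bar{A}_n = A^{\bar{f}}_n$; along the way one also checks that $\bar{f}$ is a legitimate bound in the sense of Definition~\ref{def-inverse}: $\bar{a}_n \geq n+1$ (from strict monotonicity starting at $\bar{a}_0 = 1$) gives $\bar{f}(n) \leq n$, and $\bar{a}_m \to \infty$ gives that $\bar{f}$ is unbounded.

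For the statement about the inverse, $\bar{g}(n) = \min\{t : \bar{f}(t) \geq n\}$. At $n = 0$ the condition $\bar{f}(t) \geq 0$ holds trivially, so $\bar{g}(0) = 0$. For $n \geq 1$, unfolding the definition of $\bar{f}$ shows that $\bar{f}(t) \geq n$ is equivalent to $\bar{a}_k \leq t$ for every $k < n$, which by monotonicity of $\bar{a}$ reduces to the single inequality $\bar{a}_{n-1} \leq t$; hence the minimum such $t$ is $\bar{a}_{n-1}$, giving $\bar{g}(n) = \bar{a}_{n-1}$.

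The main obstacle, though a mild one, is confirming strict monotonicity of $\bar{a}$ from the previous lemma and carefully translating the two-parameter existential of the minimally bounded hierarchy into the single-parameter bound used in the definition of $A^f$. Once these are in place, the rest is bookkeeping with the definitions of $\bar{f}$ and $\bar{g}$.
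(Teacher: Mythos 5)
Your overall strategy---using the lemma $\bar{A}_{\bar{a}_m} = \p(\bar{A}_m)$ to convert the condition $\p(\bar{A}_m) \subseteq \bar{A}_n$ into $\bar{a}_m \leq n$, collapsing the existential quantifier to the single condition $y \in \bar{A}_{\bar{f}(n)}$, and then inducting on $n$---is exactly the reduction the paper intends (it states the corollary without proof, as immediate from the lemma), and your computation of $\bar{g}$ is correct. However, there is one genuine flaw: your derivation of strict monotonicity of $\bar{a}$ is circular. You write $\bar{A}_{\bar{a}_n} = \p(\bar{A}_n) \supsetneq \p(\bar{A}_{n-1}) = \bar{A}_{\bar{a}_{n-1}}$, but the strict inclusion $\p(\bar{A}_n) \supsetneq \p(\bar{A}_{n-1})$ holds if and only if $\bar{A}_{n-1} \subsetneq \bar{A}_n$, i.e.\ if and only if $\bar{a}_{n-1} < \bar{a}_n$---precisely what you are trying to prove; if the hierarchy stalled at step $n$, the two power sets would be equal and your display would yield nothing. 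And strict monotonicity is load-bearing in your argument: it is what gives the forward direction of the equivalence $\p(\bar{A}_m) \subseteq \bar{A}_n \Leftrightarrow \bar{a}_m \leq n$, the interval structure of $\{m : \bar{a}_m \leq n\}$, and the sublinearity bound $\bar{f}(n) \leq n$ via $\bar{a}_n \geq n+1$.

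The gap is reparable from the lemma, but it needs an actual argument. For instance: if $\bar{A}_{k+1} = \bar{A}_k$ for some $k$, the defining recurrence shows $\bar{A}_j = \bar{A}_k$ for all $j \geq k$, so $\bar{a}$ is eventually constant; taking $k$ minimal, strict growth below $k$ gives $\bar{a}_k \geq k+1$, hence $\bar{a}_{\bar{a}_k} = \bar{a}_k$ by the stall, contradicting $\bar{a}_{\bar{a}_k} = |\p(\bar{A}_k)| = 2^{\bar{a}_k} > \bar{a}_k$ from the lemma. Alternatively, the inner induction in the paper's proof of the lemma already exhibits $\bar{A}_m \subsetneq \bar{A}_{m+1}$ for all $\bar{a}_{n-1} \leq m < \bar{a}_n$, from which strictness follows block by block. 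One further small wrinkle deserves a sentence: at $n = 0$ the set $\{m : \bar{a}_m \leq 0\}$ is empty, so your ``collapse to $y \in \bar{A}_{M+1}$'' does not apply, while $\bar{f}(0) = 0$ forces $A^{\bar{f}}_1 = \left\{\emptyset, \{\emptyset\}\right\}$; reconciling this with the definition of $\bar{A}_1$ requires admitting $m = -1$ with the convention $\bar{A}_{-1} = \emptyset$, $\p(\emptyset) = \{\emptyset\} \subseteq \bar{A}_0$. This ambiguity is inherited from the paper (whose lemma already asserts $\bar{A}_1 = \p(\bar{A}_0)$), but the first step of your induction silently depends on it and should say so.
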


As previously mentioned, any other surjective function $f$ with inverse $g$ (as defined in Definition \ref{def-inverse}) such that $g(n+1) - g(n) \geq \bar{g}(n+1) - \bar{g}(n)$ holds for all $n$ will produce the same sequence, possibly with repetitions. Moreover, it is possible to combine Theorem \ref{thm-bounded} and Corollary \ref{corollary-bounded} to produce a recursive relation for $\bar{b}_{n,m} = b^{\bar{f}}_{n,m}$:
\begin{align*}
\bar{b}_{n,m} &= \bar{b}_{n,m-1} + \sum_{k=1}^{n-\bar{a}_{m-1}-1} \left( \bar{b}_{n-k,m-1} \binom{\bar{b}_{m,m-1}}{k} \right) + \bar{a}_{\bar{a}_{m-1}} \binom{\bar{b}_{m,m-1}}{n-\bar{a}_{m-1}},
\end{align*}
where $\bar{a}_m$ is definable from the values $\bar{b}_{n,m}$ as $\sum_{k=0}^m \bar{b}_{m,m-1}$. This recursive relation can be implemented in a dynamic programming algorithm analogous to the one for $b_{n,m}$; small numerical values of it are shown in Table~\ref{table-minbound} in the Appendix.

\section{Conclusions and future work}
\label{sec:conclusions}
In this paper we solved a counting problem proposed by Kirby in 2008, and showed that our method can be extended by imposing restrictions on the rank or on the cardinality of the hereditarily finite sets making up a certain level of the adjunctive hierarchy. We also showed that it can be easily extended to hereditarily finite sets with atoms.

An asymptotic analysis of the recurrences revealed that the adjunctive hierarchy, even though much slower than the usual cumulative hierarchy, is still growing too fast to allow for an efficient encoding of hereditarily finite sets by numbers. 

For this reason, we proposed the minimally bounded adjunctive hierarchy, in which an $(n+1)$th level is obtained by the adjunction of a set $y \in \bar{A}_{m+1}$ to a set $x \in \bar{A}_n$ if all the sets in $\p(\bar{A}_m)$ are already present in $\bar{A}_n$. This is a natural combination of the adjunctive hierarchy and the cumulative hierarchy; more importantly, we proved that its asymptotic growth is $\Theta(2^n)$. To the best of our knowledge, this is the first result of this kind. 

The next step to take is to devise efficient ranking/unranking algorithms for our minimally bounded adjunctive hierarchy, as Ackermann's encoding \cite{Ack37} is for the cumulative hierarchy, and as the algorithms in \cite{DBLP:journals/ipl/RizziT13} are for the hierarchy of transitive sets with a given number of elements.

A different open problem can be formulated as follows. Note that all transitive sets with $n$ elements belong to $V_n$, and recall that the number of transitive sets with $n$ elements has already been obtained in \cite{P62}; it would thus be interesting to find an analog of this result for the adjunctive hierarchy, namely, to find the number of \emph{transitive} sets in $A_n$ with $n$ elements.


\section*{Acknowledgement}
We thank Laurence Kirby for telling us about this interesting problem, and Eugenio Omodeo for helpful discussions. The second author is partially supported by the Academy of Finland under grant 250345 (CoECGR), and by the European Science Foundation, activity ``Games for Design and Verification''. The third author is supported financially by the National Research Foundation of South Africa, grant number 70560.



\bibliographystyle{IEEEtran}
\bibliography{adjunction}   
%
%
%

\newpage
\appendix
\section*{A bounded growth of the adjunctive hierarchy}

\begin{table}[h]
\caption{First values of $a^f_n$ for $f(n) = \lceil n/2 \rceil$.}
\begin{center}
\setlength{\tabcolsep}{.16cm}
\begin{tabular}{|l|l|}
\hline
$a^f_{ 0 }$ &	1 \\
$a^f_{ 1 }$ &	2 \\
$a^f_{ 2 }$ &	4 \\
$a^f_{ 4 }$ &	12 \\
$a^f_{ 5 }$ &	16 \\
$a^f_{ 8 }$ &	144 \\
$a^f_{ 9 }$ &	592 \\
$a^f_{ 10 }$ &	3856 \\
$a^f_{ 11 }$ &	12112 \\
$a^f_{ 12 }$ &	25232 \\
$a^f_{ 13 }$ &	40160 \\
$a^f_{ 14 }$ &	52832 \\
$a^f_{ 15 }$ &	60752 \\
$a^f_{ 16 }$ &	7840528 \\
$a^f_{ 17 }$ &	502084400 \\
$a^f_{ 18 }$ &	246203916272 \\
$a^f_{ 19 }$ &	60472296567808 \\
$a^f_{ 20 }$ &	207302387302931456 \\
$a^f_{ 21 }$ &	355632667741263729920 \\
$a^f_{ 22 }$ &	3343198667129228884545792 \\
$a^f_{ 23 }$ &	15829569100117020469497511168 \\
$a^f_{ 24 }$ &	258028007928627813480157366817024 \\
$a^f_{ 25 }$ &	2143825383084631588989060293305465472 \\
$a^f_{ 26 }$ &	44114691903811742239796481826048657798272 \\
$a^f_{ 27 }$ &	472009200002288950265751813320485308731259904 \\
$a^f_{ 28 }$ &	9485240116915376700878425362559719242896317641728 \\
$a^f_{ 29 }$ &	102586446112048504015292656228608097259346546351742208 \\
\hline
\end{tabular}
\end{center}
\label{table-2bound}
\end{table}

\begin{table}[h!]
\caption{First values of $a^{f_1}_n$ and $a^{f_2}_n$ for $f_1 = \lfloor \sqrt n \rfloor$ and $f_2 = \lfloor \log_2(n+1) \rfloor$.}
\begin{center}
\begin{tabular}{|l|c|l|}
\hline
$a^{f_1}_{ 0 }$ &	1 & $a^{f_2}_{ 0 }$ \\
$a^{f_1}_{ 1 }$ &	2 & $a^{f_2}_{ 1 }$\\
$a^{f_1}_{ 2 }$ &	4 & $a^{f_2}_{ 2 }$\\
$a^{f_1}_{ 5 }$ &	12 & $a^{f_2}_{ 4 }$\\
$a^{f_1}_{ 6 }$ &	16 & $a^{f_2}_{ 5 }$\\
$a^{f_1}_{ 26 }$ &	144 & $a^{f_2}_{ 16 }$\\
$a^{f_1}_{ 27 }$ &	592 & $a^{f_2}_{ 17 }$\\
$a^{f_1}_{ 28 }$ &	1488 & $a^{f_2}_{ 18 }$\\
$a^{f_1}_{ 29 }$ &	2608 & $a^{f_2}_{ 19 }$\\
$a^{f_1}_{ 30 }$ &	3504 & $a^{f_2}_{ 20 }$\\
$a^{f_1}_{ 31 }$ &	3952 & $a^{f_2}_{ 21 }$\\
$a^{f_1}_{ 32 }$ &	4080 & $a^{f_2}_{ 22 }$\\
$a^{f_1}_{ 33 }$ &	4096 & $a^{f_2}_{ 23 }$\\
$a^{f_1}_{ 37 }$ &	20480 & $a^{f_2}_{ 32 }$\\
$a^{f_1}_{ 38 }$ &	45056 & $a^{f_2}_{ 33 }$\\
$a^{f_1}_{ 39 }$ &	61440 & $a^{f_2}_{ 34 }$\\
$a^{f_1}_{ 40 }$ &	65536 & $a^{f_2}_{ 35 }$\\
$a^{f_1}_{ 677 }$ &	8454144 & $a^{f_2}_{ 65536 }$\\
$a^{f_1}_{ 678 }$ &	541130752 & $a^{f_2}_{ 65537 }$\\
$a^{f_1}_{ 679 }$ &	22913548288 & $a^{f_2}_{ 65538 }$\\
$a^{f_1}_{ 680 }$ &	722051596288 & $a^{f_2}_{ 65539 }$\\
$a^{f_1}_{ 681 }$ &	18060675186688 & $a^{f_2}_{ 65540 }$\\
$a^{f_1}_{ 682 }$ &	373502458789888 & $a^{f_2}_{ 65541 }$\\
$a^{f_1}_{ 683 }$ &	6568344973017088 & $a^{f_2}_{ 65542 }$\\
$a^{f_1}_{ 684 }$ &	100265338000703488 & $a^{f_2}_{ 65543 }$\\
$a^{f_1}_{ 685 }$ &	1349558578369855488 & $a^{f_2}_{ 65544 }$\\
$a^{f_1}_{ 686 }$ &	16216148138762764288 & $a^{f_2}_{ 65545 }$\\
$a^{f_1}_{ 687 }$ &	175694108877523058688 & $a^{f_2}_{ 65546 }$\\
$a^{f_1}_{ 688 }$ &	1730604226080435929088 & $a^{f_2}_{ 65547 }$\\
$a^{f_1}_{ 689 }$ &	15605186810352581541888 & $a^{f_2}_{ 65548 }$\\
$a^{f_1}_{ 690 }$ &	129574972324016634789888 & $a^{f_2}_{ 65549 }$\\
$a^{f_1}_{ 691 }$ &	995745342227863439474688 & $a^{f_2}_{ 65550 }$\\
$a^{f_1}_{ 692 }$ &	7113073579673781497561088 & $a^{f_2}_{ 65551 }$\\
\hline
\end{tabular}
\end{center}
\label{table-sqrtbound}
\end{table}

\newpage
\phantom{.}

\begin{table}[t!]
\caption{First values of $\bar{a}_n$.}
\begin{center}
\begin{tabular}{|l|l|}
\hline
$\bar{a}_{ 0 }$ &	1 \\
$\bar{a}_{ 1 }$ &	2 \\
$\bar{a}_{ 2 }$ &	4 \\
$\bar{a}_{ 3 }$ &	12 \\
$\bar{a}_{ 4 }$ &	16 \\
$\bar{a}_{ 5 }$ &	144 \\
$\bar{a}_{ 6 }$ &	592 \\
$\bar{a}_{ 7 }$ &	1488 \\
$\bar{a}_{ 8 }$ &	2608 \\
$\bar{a}_{ 9 }$ &	3504 \\
$\bar{a}_{ 10 }$ &	3952 \\
$\bar{a}_{ 11 }$ &	4080 \\
$\bar{a}_{ 12 }$ &	4096 \\
$\bar{a}_{ 13 }$ &	20480 \\
$\bar{a}_{ 14 }$ &	45056 \\
$\bar{a}_{ 15 }$ &	61440 \\
$\bar{a}_{ 16 }$ &	65536 \\
$\bar{a}_{ 17 }$ &	8454144 \\
$\bar{a}_{ 18 }$ &	541130752 \\
$\bar{a}_{ 19 }$ &	22913548288 \\
$\bar{a}_{ 20 }$ &	722051596288 \\
$\bar{a}_{ 21 }$ &	18060675186688 \\
$\bar{a}_{ 22 }$ &	373502458789888 \\
$\bar{a}_{ 23 }$ &	6568344973017088 \\
$\bar{a}_{ 24 }$ &	100265338000703488 \\
$\bar{a}_{ 25 }$ &	1349558578369855488 \\
$\bar{a}_{ 26 }$ &	16216148138762764288 \\
$\bar{a}_{ 27 }$ &	175694108877523058688 \\
$\bar{a}_{ 28 }$ &	1730604226080435929088 \\
$\bar{a}_{ 29 }$ &	15605186810352581541888 \\
$\bar{a}_{ 30 }$ &	129574972324016634789888 \\
$\bar{a}_{ 31 }$ &	995745342227863439474688 \\
$\bar{a}_{ 32 }$ &	7113073579673781497561088 \\
$\bar{a}_{ 33 }$ &	47415471379317476939071488 \\
$\bar{a}_{ 34 }$ &	295946924477120265495052288 \\
$\bar{a}_{ 35 }$ &	1734813231885452199240204288 \\
$\bar{a}_{ 36 }$ &	9576634607260861238151282688 \\
$\bar{a}_{ 37 }$ &	49906001680620107723979685888 \\
$\bar{a}_{ 38 }$ &	246053377901049170177781465088 \\
$\bar{a}_{ 39 }$ &	1150036937873461371051824447488 \\
$\bar{a}_{ 40 }$ &	5104965012752764749875762495488 \\
$\bar{a}_{ 41 }$ &	21557465804250666805783344775168 \\
$\bar{a}_{ 42 }$ &	86734680478261586488801843806208 \\
$\bar{a}_{ 43 }$ &	332959713691191727513538395701248 \\
$\bar{a}_{ 44 }$ &	1221128583494975450495623815036928 \\
$\bar{a}_{ 45 }$ &	4283779858680436564226952847228928 \\
\hline
\end{tabular}
\end{center}
\label{table-minbound}
\end{table}

\end{document}